\documentclass[a4paper,USenglish,cleveref, autoref, thm-restate]{lipics-v2021}
\hideLIPIcs

\title{Revisiting the Sparse Matrix Compression Problem}

\author{Vincent Jugé}{Université Gustave Eiffel}{}{}{}

\author{Dominik {Köppl}}{University of Yamanashi, Kofu, Japan \and \url{https://dkppl.de}}{dkppl@dkppl.de}{0000-0002-8721-4444}{}

\author{Vincent Limouzy}{University Clermont Auvergne}{}{}{}

\author{Andrea Marino}{University of Florence}{}{}{}

\author{Jannik Olblich}{University of Ulm}{}{https://orcid.org/0000-0003-3291-7342}{}

\author{Giulia Punzi}{University of Pisa}{giulia.punzi@unipi.it}{https://orcid.org/0000-0001-8738-1595}{}

\author{Takeaki Uno}{National Institute of Informatics}{}{}{}

\authorrunning{V. Jugé, D. Köppl, V. Limouzy, A. Marino, J. Olblich, G. Punzi, and T. Uno} 

\Copyright{Vincent Jugé, Dominik Köppl, Vincent Limouzy, Andrea Marino, Jannik Olblich, Giulia Punzi, and Takeaki Uno} 

\ccsdesc[100]{Theory of computation~Design and analysis of algorithms} 

\keywords{sparse matrix compression, approximation algorithms, dynamic programming, parameterized complexity} 

\category{} 

\relatedversion{} 

\supplement{Source code for re-running the experiments via docker can be accessed at \url{https://github.com/koeppl/sparsematrix}.}

\acknowledgements{This work was supported by the Research Institute for Mathematical Sciences,
an International Joint Usage/Research Center located in Kyoto University,
and a research stay thanks to Bézout Labex.
We thank Jesper Jansson, Hideo Bannai, and the participants at LA Symposium 2024\&2025 Winter for constructive discussion.
The research was supported by ROIS NII Open Collaborative Research 2026 Grant Number 252M-23667,
and JSPS KAKENHI Grant Numbers JP23H04378 and JP25K21150. GP was supported by the Italian Ministry of Research, under the complementary actions to the NRRP “Fit4MedRob - Fit for Medical Robotics” Grant (\# PNC0000007).}

\EventEditors{John Q. Open and Joan R. Access}
\EventNoEds{2}
\EventLongTitle{42nd Conference on Very Important Topics (CVIT 2016)}
\EventShortTitle{CVIT 2016}
\EventAcronym{CVIT}
\EventYear{2016}
\EventDate{December 24--27, 2016}
\EventLocation{Little Whinging, United Kingdom}
\EventLogo{}
\SeriesVolume{42}
\ArticleNo{23}

\def\problembox#1{\vspace{2mm}\noindent\fbox{\begin{minipage}{0.985\linewidth}#1
    \end{minipage}}\vspace{2mm}}

\usepackage{algpseudocodex}
\usepackage{algorithm}

\usepackage[utf8]{inputenc}
\usepackage{xspace}
\usepackage{xcolor}
\usepackage{graphicx}
\usepackage{amsmath,amssymb,amsthm}
\usepackage[natbib,cref]{nikis}
\usepackage{listings}
\usepackage{diagbox}

\newcommand*{\probMaxShift}{\textsc{minMaxShift}\xspace}
\newcommand*{\probMaxShiftString}{\textsc{stringMinMaxShift}\xspace}
\newcommand*{\probMinLength}{\textsc{minLength}\xspace}
\newcommand*{\probMaxShiftTiles}{\textsc{tilesMinMaxShift}\xspace}

\begin{document}
\nolinenumbers

\maketitle

\begin{abstract}
The sparse matrix compression problem asks for a one-dimensional representation of a binary $n \times \ell$ matrix, formed by an integer array of row indices and a shift function for each row, such 
that accessing a matrix entry is possible in constant time by consulting this representation. 
It has been shown that the decision problem for finding an integer array of length $\ell+\rho$ or restricting the shift function up to values of $\rho$ is NP-complete (cf.~\ the textbook of Garey and Johnson). 
As a practical heuristic, a greedy algorithm has been proposed to shift the $i$-th row until it forms a solution with its predecessor rows.
Despite that this greedy algorithm is cherished for its good approximation in practice, we show that it actually exhibits an approximation ratio of $\Theta(\sqrt{\ell+\rho})$.
We give further hardness results for parameterizations such as the number of distinct rows or the maximum number of non-zero entries per row.
Finally, we devise an DP-algorithm that solves the problem for double-logarithmic matrix widths or logarithmic widths for further restrictions.
We study all these findings also under a new perspective by introducing a variant of the problem, where we wish to minimize the length of the resulting integer array by trimming the non-zero borders, which has not been studied in the literature before but has practical motivations.
\end{abstract}

\clearpage
\setcounter{page}{1}

\section{Introduction}
\label{sec:intro}

Binary matrices have ubiquitous applications in computer science, such as in databases, data mining, and machine learning.
For instance, a binary matrix can represent an incidence matrix of a bipartite graph or a transition matrix of a finite automaton.
In practice, these matrices are often sparse, which means that most of the entries are zero.
To save space and time, it is desirable to compress these matrices.
One way to compress a sparse matrix is to represent it as a one-dimensional array of row indices and a shift function for each row.
Such a representation allows for constant-time access to the matrix, and has already been proposed in the 1970s. It was also studied in the context of compilers and databases in the 1980s~\cite{aho86compilers}.
The problem of finding such a representation is known as the \emph{sparse matrix compression problem}, which we denote by \probMaxShift{}.
The original version of the decision problem of \probMaxShift{} is defined as follows:

\problembox{\probMaxShift{}, {\cite[Chapter A4.2, Problem SR13]{garey79computers}}
\\
	\textbf{Input:} 
An $n \times \ell$ binary matrix $M[1..n][1..\ell]$ with $n$ rows and $\ell$ columns and 
entries $M[i][j] \in \{ 0, 1 \}$ for all $i \in [1..n]$, $j \in [1..\ell]$, 
and an integer $\rho \in [0..\ell \cdot (n-1)]$.
\\
\textbf{Task:}
Decide whether there exists an integer array $P[1..\ell+\rho]$ with $P[i] \in [0..n]$ for every $i \in [1..\ell+\rho]$,
and a function $s: [1..n] \rightarrow [0..\rho]$ such that
$M[i][j] = 1 \Leftrightarrow P[s(i) + j] = i$ 
for all $i \in [1..n]$ and $j \in [1..\ell]$.
}
In what follows, we call $P$ the \emph{placement} and $s$ the \emph{shift function} of the representation of $M$ asked by \probMaxShift{}.
Moreover, we visualize $0$ by $\star$ and $1$ by $i$ in the $i$-th row of $M$ for better readability, for each $i \in [1..n]$.

\begin{example}
\label{example:smc}
Consider the \probMaxShift~problem with $\rho= 2$ for the following $3\times 5$ matrix $M$ 
on the left:

\begin{minipage}{0.45\textwidth}
    $$M = \begin{pmatrix}
        \star & 1 & \star & \star & 1 & \star \\
        \star & 2 & \star & 2 & \star & \star \\
        \star & 3 & \star & \star & \star & 3
\end{pmatrix} $$
\end{minipage}
\hfill
\begin{minipage}{0.45\textwidth}
$$\begin{matrix}
        \star & 1 & \star & \star & 1 & \star\\
        \rightarrow &\rightarrow & \star &2 & \star & 2 & \star & \star\\
        \rightarrow & \star &3 & \star & \star & \star & 3
\end{matrix}$$
\end{minipage}

Intuitively, we desire to find a way to shift each row in a way such that every column contains at most one entry different to $\star$, 
and use this information to compress the matrix into a one-dimensional vector of row indices of length $5+2=7$. 
For instance, the shift function $s(1) = 0$, $s(2) = 2$, $s(3) = 1$ allows such a desired shift, which can be represented as shown on the right of $M$.
From this, we can obtain the placement $P=[\star{},1,3,2,1,2,3,\star{}]$ by setting $P[j]$ to be equal to the only row index which has a non-zero in column $j$ after the shift function is applied. Since such $P$ is of the desired length, the problem has a positive answer.
\end{example}

In this paper, we also focus on a new variant of the problem, which we call \probMinLength.
In this variant, we wish to minimize the length of the resulting placement $P$ by \emph{trimming the non-zero borders}, which is equivalent to performing such task on trimmed input strings. 
Our problem is formally defined as follows:

\problembox{\probMinLength{}
	\\
	\textbf{Input:}
A set of $n$ binary strings $S_1, \ldots, S_n$ such that $S_i \in i\{\star,i\}^{\ell_i} i$ for some $\ell_i \leqslant \ell - 2$ (i.e., possibly different lengths and all over distinct binary alphabets),
and an integer $\rho \in [0..\ell n]$.
\\
\textbf{Task:}
Decide whether there exists a string, called \emph{placement} $S[1..\rho] \in (\{\star\} \cup [1..n])^\rho$ such that each $S_i$ has a match at some position of $S$, where $\star$ always matches.
}
For example, the \probMinLength instance corresponding to the one given in Example~\ref{example:smc} would be $S_1 = 1\star\star 1$, $S_2 = 2\star 2$, 
$S_3 = 3\star\star\star 3$, and the resulting output  minimum-length string would be $S = 132123$.  
This problem can also be modeled as a combinatorial puzzle of 1-dimensional polyominos with gaps; as such, we also refer to the input strings as {tiles} (see \cref{section:preliminaries}). 
For more insight on the motivation for defining this variant, and other possible variations, see \cref{section:variants-relationships} in the appendix.

\subparagraph*{State-of-the-art}
We observe that the number of possible configurations for the shift function $s$ is $(\rho+1)^n$, which is prohibitive even for a maximum shift $\rho$ of~1.
Hence, there is no hope for a brute-force algorithm checking all possible shifts to be efficient.
While \probMinLength{} is a new problem,
\probMaxShift{} has been studied since the 1970s.
\citet{ziegler77small} was the first who mentioned \probMaxShift{}.
He gave a heuristic that tries to fit the next row at the leftmost possible available position.
He also augmented his heuristic with a strategy that rearranges the order of the rows by prioritizing the row with the largest number of numerals.
This strategy is known as Ziegler's algorithm.
Despite the fact that Ziegler's algorithm is only a heuristic, it is often used in practice due to its good performance.
For instance, \citet{sadayappan89efficient} similarly studied practical aspects of Ziegler's algorithm,
and \citet[Section 3.9.8]{aho86compilers} recommend using Ziegler's algorithm to represent the state transition of a deterministic finite automaton (DFA) in a compressed form.
However, the approximation ratio of Ziegler's algorithm has not been studied yet.

Unfortunately, finding the optimal solution is generally hard.
\citet{even77remarks} gave an NP-hard proof based on 3-coloring, which found an entry in the textbook of Garey and Johnson~\cite[Chapter A4.2, Problem SR13]{garey79computers}.
They showed that the problem is NP-complete even if the maximum needed shift is at most two.
\probMaxShift{} has been adapted to Bloom filters~\cite{chang91letter,chang93refinement,dillinger22fast}, 
and has also been studied under the name \textsc{Compressed Transition Matrix}~\cite[Sect.~4.4.1.3]{martin2010scientific} problem.
Finally, \citet{bannai24npcompleteness} showed that \probMaxShift{} is NP-hard even when the width $\ell$ of the matrix is $\ell \in \Omega(\lg n)$.
Their hardness proof can be applied directly to \probMinLength{}, proving that \probMinLength{} is NP-hard even when all tiles have a width of $\ell \in \Omega(\lg n)$.
That is because they reduce to a problem instance whose placement has no holes (i.e., does not contain the wildcard $\star$ and a length equal to the sum of the numerals of all tiles).
For such placements we have the following observation.

\begin{observation}\label{lemNoHolesOpt}
A placement of \probMinLength{} without holes is optimal, but not every problem instance admits a placement without holes.
\end{observation}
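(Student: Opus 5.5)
The observation has two parts, and I will prove them separately: a lower bound showing that hole-free placements are optimal, and a small counterexample showing that hole-free placements need not exist.

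\emph{Optimality.} Let $c_i$ be the number of numerals (entries equal to $i$) in tile $S_i$. In any valid placement $S$, the positions matched by the numerals of $S_i$ carry the symbol $i$. Since distinct tiles use distinct symbols, these position sets are pairwise disjoint. Hence $S$ contains at least $\sum_{i=1}^n c_i$ non-$\star$ symbols, so $|S| \ge \sum_i c_i$ for every feasible placement. A placement without holes consists only of numerals. It therefore has length exactly equal to its number of numerals, and that number is exactly $\sum_i c_i$. This is because every non-$\star$ position of the placement must be covered by a numeral of the tile owning that symbol. (Otherwise we may replace that symbol by $\star$, so w.l.o.g.\ uncovered positions are $\star$; a hole-free placement then has every position covered.) So a hole-free placement attains the lower bound and is optimal.

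\emph{Non-existence in general.} A single tile suffices. Take $n=1$ and $S_1 = 1\star 1$, which is allowed since $S_1 \in 1\{\star,1\}^{1}1$. Any placement has to contain a match of $S_1$, and that match spans three consecutive positions. The middle one must be $\star$, since if it were $1$ the tile's $\star$ would still match but the numeral count forces it uncovered; more simply, the minimum placement is $1\star 1$ of length $3 > 2 = c_1$. Any placement of length $2$ cannot contain a window of length $3$, so no placement reaches the lower bound, and hence no hole-free placement exists.

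The only subtle point is the convention on whether an uncovered position may carry a numeral. I handle it by observing that the length lower bound $\sum_i c_i$ together with the window-length argument already suffices: every placement has length at least $\max(\sum_i c_i,\max_i |S_i|)$, and the instance $S_1 = 1\star1$ makes the second term strictly larger.
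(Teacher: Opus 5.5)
Your proof is correct and follows the reasoning the paper relies on implicitly. Every placement has length at least the total number $\sum_i c_i$ of numerals, since distinct tiles carry distinct symbols; a hole-free placement attains this bound; and a single tile with an inner $\star$, such as $1\star 1$, can never attain it. One remark: the paper avoids your convention issue by defining ``without holes'' to mean no $\star$ \emph{and} length equal to $\sum_i c_i$, so your ``w.l.o.g.'' normalization is unnecessary. It is also not truly without loss of generality for a fixed placement, since $1111$ is $\star$-free for the tile $1\star 1$ but is not optimal.
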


\citet{chang96improvement} considered a different variant in which cyclic rotations of a matrix row are allowed. However, this work is only practical and does not provide any theoretical guarantees.
Another variation is to restrict rows to have no holes and to have not only one placement, but a fixed number of placements of the same length,
which reduces to a rectangle packing problem~\cite[Section~2]{demaine07jigsaw}.
Finally, \citet{manea24subsequences} studied a variant of embedding subsequences with gap constraints.

\subparagraph*{Our Contribution} 
In this paper, we deepen the study of problem \probMaxShift, initiating a theoretical study of \probMinLength{} as a natural variation of \probMaxShift{} as well.
We first give motivation for defining \probMaxShift{} among other variation candidates.
Subsequently, we analyze the approximation ratio of previously-known greedy strategies for both problems, proving approximation ratios of $\mathcal O(\sqrt{m})$ for $\probMinLength$ and $\Omega(\sqrt{m})$ for \probMaxShift, if the optimal solution has length $m$. 
As previously mentioned, NP-hardness for \probMaxShift is known in the literature, even when the width of the matrix is $\ell \in \Omega(\lg n)$, and from this it follows that \probMinLength is also NP-hard, even when all tiles have a width of $\ell \in \Omega(\lg n)$. Still, we prove further parameterized hardness results for \probMinLength, namely, when the input is composed by only one type of tile of length $\Omega(n^2)$, and when the input is composed of tiles with exactly two non-$\star$ letters. 
On the positive side, we give a DP algorithm (\cref{thmMinLengthDPGeneral}) running in time $O(n^\kappa \ell n 2^\ell n) \subset O(n^{2^\ell} \ell n 2^\ell n)$ for \probMinLength,
where $\ell$ is the maximum length of a tile and $\kappa$ is the number of distinct tile types, which can be also adapted to solve \probMaxShift.
We present a simplified algorithm (\cref{thmminLengthPoly}) for the special case where there is only one distinct kind of tile (i.e., all tiles have the same shape). 
Finally, we experimentally compare our DP algorithm with greedy approximation algorithms using different tile-choosing strategies.

\section{Preliminaries}
\label{section:preliminaries}
A \emph{partial string} is a string that contains the \emph{wildcard symbol} $\star$ that can match any letter (including itself).
Non-wildcard symbols are called \emph{numerals}.
We say that a partial string $S$ has a \emph{match} in another partial string $P$ at position $x$ if $x+|S| < |P|$ and $S[i]= P[x+i]$ for all $i$ such that $S[i]\neq \star$ (that is, $\star$ always matches). 

\noindent
\begin{minipage}{0.5\linewidth}
	\begin{example}
$S=\star{}1\star{}11\star{}$ has a match in $P = 3321211$ at position $x=3$:
	\end{example}
\end{minipage}
\hfill
\begin{minipage}{0.38\linewidth}
\begin{equation*}
   \begin{matrix}
         i & 1 & 2 & 3 & 4 & 5 & 6 & 7\\
         S =&&& \star &1 & \star & 1 & 1 & \star\\
         P[i]=&3 &3 & 2 & 1 & 2 & 1 & 1 &3
 \end{matrix}
\end{equation*}
\end{minipage}

A \emph{tile} is a partial string that starts and ends with a numeral and contains only one distinct numeral (as in the input of problem \probMinLength).
We say that two tiles $X$ and $Y$ have the \emph{same tile type} if $X[i]$ and $Y[i]$ are both either numerals or wildcards, for all $i$. 
For instance, $1 \star 1$ and $2 \star 2$ have the same tile type, but $11$ does not.

We define the following string-formulation of \probMaxShift, which is equivalent to the latter (see \cref{section:variants-relationships} in the appendix for more details): 

\problembox{\probMaxShiftString{} {}
\\
\textbf{Input:} 
A set of $n$ binary strings $S_1, \ldots, S_n$ such that $S_i \in \{\star,i\}^{\ell}$ for some $\ell$ (i.e., all have the same length $\ell$ but  distinct binary alphabets),
and an integer $\rho \in [0..\ell n]$.
\\
\textbf{Task:}
Decide whether there exists a partial string $P[1..\ell+\rho]$ with $P[i] \in \{\star\} \cup [1..n]$ for every $i \in [1..\ell+\rho]$, such that each $S_i$ has a match in $P$ at some position $x_i \leqslant \rho$. 
}
{In the rest of the paper, we will thus equivalently refer to \probMaxShift as a problem on a matrix or on a set of strings. } 
Furthermore, we will refer to the resulting output strings of \probMinLength or \probMaxShift as a \emph{placement} of the strings/tiles.

\subparagraph*{Roadmap}
In the rest of the paper, we present results for both the original \probMaxShift{} problem, as well as our newly introduced natural variation \probMinLength{}.
For both problems, we first analyze the approximation ratio of known greedy strategies in~\cref{section:greedy}.
Then, we describe our dynamic programming algorithms to solve both problems for double-logarithmic matrix widths or logarithmic widths in~\cref{section:dp},~\cref{section:hardness} is dedicated to proving hardness results for \probMinLength{} and \probMaxShift{} with respect to different parameters. Finally, in~\cref{section:experiments} we present experimental results for the greedy algorithm, with different tile-choosing strategies, and for our dynamic programming algorithm.

\section{Approximation Ratio of Greedy Strategies}
\label{section:greedy}
As far as we are aware of, all algorithmic ideas in the literature for  \probMaxShift{} (and, consequently, \probMinLength{}) are based on the leftmost-fit greedy strategy. Indeed, Ziegler's algorithm is a special case of this type of algorithm. In this section, we study the approximation ratio of this strategy both when applied in its basic form, and in the special case of Ziegler's algorithm.

The \emph{leftmost-fit greedy strategy} processes the tiles in the order in which they are given, for instance by a queue, and proceeds as follows:
\begin{enumerate}
\item Place the first tile at position 1.
\item Place the next tile at the leftmost position $\geqslant 1$ at which the tile fits.
\item Repeat Step 2 until the last tile of the input queue has been placed.
\end{enumerate}
Ziegler's algorithm runs the leftmost-fit greedy strategy
after sorting the input sequences by their number of numerals in descending order. 

We show the following result on the approximation ratio of the leftmost-fit greedy strategy and Ziegler's algorithm, for \probMinLength and \probMaxShift:

\begin{theorem}\label{thmApproxSqrtM}
	Both greedy strategies have an approximation ratio of $\Oh{\sqrt{m}}$ for \probMinLength{} (resp.\ \probMaxShift{}) if the optimal solution has length $m$ (resp.\ maximum shift $m$).
	This ratio is tight in the sense that there is an instance for which both strategies exhibit a ratio of $\Om{\sqrt{m}}$.
\end{theorem}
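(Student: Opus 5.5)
I plan to prove the upper bound by a counting argument on the greedy placement, and the lower bound by an explicit family of tiles.

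\textbf{Upper bound for \probMinLength{}.}
Let $m$ be the optimal length. Every tile has length at most $m$, and the total number of numerals $N$ satisfies $N \leqslant m$. Consider the leftmost-fit greedy placement (in any order, so in particular Ziegler's order), and let $L$ be the rightmost end position of a placed tile.

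Let $T$ be the tile attaining this end. It was placed at position $p = L - |T| + 1 \geqslant L - m + 1$. By the leftmost-fit rule, every position $x < p$ was blocked: some numeral of $T$ landed at $x + i$ on an already occupied cell.

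The key step is to bound how many shifts can be blocked by few occupied cells. Let $k$ be the number of numerals of $T$ and $N'$ the number of occupied cells. Each occupied cell blocks at most $k$ shifts, so
\[
p - 1 \leqslant k N' \leqslant kN.
\]
This alone gives only a bound of order $m^2$. To reach $\sqrt{m}$ I split into two cases.

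\emph{Case $k \leqslant \sqrt{m}$.} Then $p \leqslant \sqrt{m}\, m + 1$, hence $L \leqslant m\sqrt{m} + m$, which is ratio $O(\sqrt{m})$.

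\emph{Case $k > \sqrt{m}$.} Then $T$ has more than $\sqrt{m}$ numerals. Since $N \leqslant m$, there are fewer than $\sqrt{m}$ such heavy tiles. I will argue on the last heavy tile, or on blocks of light tiles.

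More carefully, I plan a potential argument. The light tiles ($\leqslant \sqrt m$ numerals) never get placed beyond $\sqrt{m}\,N + m \leqslant O(m\sqrt m)$, by the same counting. A heavy tile placed at $p$ must be blocked at every shift $x<p$. Within any window of length $m$, the heavy tiles placed there already occupy at least $\sqrt m$ cells each. So each heavy tile extends the frontier by at most $O(m)$ beyond the previously placed material's footprint. With fewer than $\sqrt m$ heavy tiles, this contributes at most $O(m\sqrt m)$.

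The main obstacle is making this frontier argument rigorous when light and heavy tiles interleave in the greedy order. I expect to handle it by induction on the processing order, maintaining the invariant that the current length is at most $(\#\text{heavy so far}+\sqrt m + 1)\, m$.

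\textbf{Upper bound for \probMaxShift{}.}
For \probMaxShift{}, the same argument applies with $m$ the maximum shift. An optimal solution has length $\ell + m$, and all strings have length $\ell$. The counting is done with $N \leqslant \ell + m$, so greedy shifts are bounded by $O(\sqrt{\cdot}) \cdot (\ell + m)$ in the analogous way.

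\textbf{Lower bound.}
For tightness I will construct, with $k = \sqrt m$:
\begin{itemize}
\item a comb tile $C$ with $k$ numerals spaced $k$ apart, of length about $m$;
\item about $k$ copies (distinct numerals) of a contiguous block of length $k$.
\end{itemize}
The optimum interleaves them into length $\Theta(m)$ with no holes, which is optimal by \cref{lemNoHolesOpt}.

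The input order and the numeral counts are chosen so that both plain greedy and Ziegler's descending-numeral order place the blocks first. I would, for example, give the blocks $k+1$ numerals and the comb $k$ numerals. Placed first, the blocks fill $[1..k^2]$ contiguously. The comb, whose gaps are of length $k-1 < k+1$, cannot fit until it is beyond this region. I will then replicate this gadget $\sqrt m$ times in a cascading way, so that each comb is pushed past the previous combs' spread-out footprint. This forces the greedy length to $\Omega(m\sqrt m)$, and the same construction read with fixed width gives the \probMaxShift{} bound.

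Verifying that the cascading gadgets do not accidentally fit into earlier holes is the delicate part of the lower bound.
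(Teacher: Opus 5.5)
\textbf{Upper bound.} Your argument is correct, and the obstacle you flag at the end is not real. For a tile with $k\le\sqrt m$ numerals, the bound $p-1\le kN'\le km$ uses only $N'\le m$, so it holds however light and heavy tiles interleave. Every tile fits directly after the current right end, so a heavy tile increases the length by at most its own length, which is $\le m$; the remark about heavy tiles occupying $\sqrt m$ cells per window is not needed. Since there are fewer than $\sqrt m$ heavy tiles, induction over the processing order gives final length at most $(\sqrt m+1)m+\sqrt m\cdot m=(2\sqrt m+1)m$. This holds for every order, so also for Ziegler's.

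This is a different route from the paper's. The paper argues by contradiction: it cuts a placement longer than $(3m+1)\sqrt m$ into a prefix of $\sqrt m$ chunks of length $2m$ and a suffix. It takes the chunk with fewest numerals ($u$) and the lightest tile lying wholly in the suffix ($v$), gets $uv=\Omega(m)$ from the same (occupied cell, numeral) counting, and derives more than $m$ numerals by pigeonhole. Your per-tile version avoids the chunking and the need to show that the suffix holds $\sqrt m$ tiles, and it gives an explicit constant.

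For \probMaxShift{}, say explicitly that your transfer (with $N\le\ell+m$ and rows of length $\ell$) bounds the \emph{length} ratio by $O(\sqrt{\ell+m})$, as in the abstract, not the shift ratio by $O(\sqrt m)$. The latter is false: for the rows $1^{S}\star^{\ell-S}$ and $2\star^{\ell-1}$ the optimal shift is $1$, while both greedy strategies shift the second row by $S$. The paper's one-sentence transfer has the same imprecision.

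\textbf{Lower bound.} Here there is a genuine gap. Your gadget does not separate greedy from the optimum. A contiguous block of length $k$ or $k+1$ cannot enter a comb gap of length $k-1$, so the claimed hole-free interleaved optimum does not exist. With the blocks first, greedy just concatenates blocks and comb, which is within a constant factor of optimal. Replicating the gadget $\sqrt m$ times does not help as described: every copy carries about $m$ numerals, so the optimum also grows to order $m^{3/2}$, and the cascading is never specified.

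The missing idea is that the optimum should pack tiles of the \emph{same} type with each other, while the input order \emph{alternates} types so that greedy never can. Your ingredients suffice for this:
\begin{itemize}
\item Take $q$ combs with teeth at offsets $0,q,\dots,(q-1)q$, each of length $L=q^2-q+1$, and $q-1$ blocks of $q$ numerals.
\item Combs at positions $1,\dots,q$ cover $[1..q^2]$ exactly and the blocks follow, so there is a hole-free optimum of length $m=2q^2-q$.
\item Give the input in the order comb, block, comb, $\dots$, comb.
\item When a block arrives, every free run left of the current end is a comb gap of length $q-1$ and the end cell is a tooth, so the block goes directly after the end.
\item A comb whose span meets a block collides with it: a block inside the span contains a tooth, and a partial overlap contains the first or last tooth. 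So a new comb's span either coincides with an old comb's span, where the first teeth collide, or lies after the end.
\end{itemize}
Hence greedy has length $qL+(q-1)q=q^3$, a ratio of $\Omega(q)=\Omega(\sqrt m)$.

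All tiles have $q$ numerals, so a stable sort keeps this order for Ziegler. To avoid relying on ties, use combs with about $4q$ tooth slots of period $q$, deleting progressively more non-adjacent interior teeth so that all gaps stay below $2q$. Use blocks of lengths at least $2q$ whose sizes interleave with the comb sizes. The optimum stays $O(q^2)$ and the argument is unchanged. Padding rows to a common width gives the \probMaxShift{} version.

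The paper instead uses two comb families with coprime periods $\delta-1$ and $\delta$. The comb/block version is easier to verify, because any contact between a comb's span and a block already forces a collision. Coprime combs, by contrast, are only guaranteed to collide once they overlap on $\delta(\delta-1)$ positions.
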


	\begin{table*}[t]
		\centering
	\caption{$X$ and $Y$ for $\delta=4$ with collisions at $i_1 = 1$ and $i_2 = 13$. $x$ and $y$ are distinct values.
        }
		\label{tab:label}
			\setlength{\tabcolsep}{1pt}
		\begin{tabular}{p{3em}*{22}{p{1.5em}}}
 $i$ & 1 & 2 & 3 & 4 & 5 & 6 & 7 & 8 & 9 & 10 & 11 & 12 & 13 & 14 & 15 & 16 & 17 & 18 & 19 & 20 & 21 
	\\
 $X[i] = $  &
	$x$ & $\star{}$ & $\star{}$ & $x$ & $\star{}$ & $\star{}$ & $x$ & $\star{}$ & $\star{}$ & $x$ & $\star{}$ & $\star{}$ & $x$ & $\star{}$ & $\star{}$ & $x$ & $\star{}$ & $\star{}$ & $x$ & $\star{}$ & $\star{}$ 
	\\
	$Y[i] = $ &
	$y$ & $\star{}$ & $\star{}$ & $\star{}$ & $y$ & $\star{}$ & $\star{}$ & $\star{}$ & $y$ & $\star{}$ & $\star{}$ & $\star{}$ & $y$ & $\star{}$ & $\star{}$ & $\star{}$ & $y$ & $\star{}$ & $\star{}$ & $\star{}$ & $y$ \\
		\end{tabular}
	\end{table*}

We split the proof of \cref{thmApproxSqrtM} in two parts, for the lower and the upper bounds.

\subsection{Proof of \cref{thmApproxSqrtM}: Lower Bound}
\label{section:approx-lowerbound}
Let us consider an instance of \probMinLength where there are two different types of tiles, $X$ and $Y$.
Each tile of type $X$ (resp. $Y$) is a periodic string with root $x \mapsto x{\star}^{\delta-2}$  (resp.\ $y \mapsto y {\star}^{\delta-1}$), where $x$ (resp. $y$) is the respective numeral.
Let the input be formed by $\delta-2$ tiles of type $X$ and $\delta-1$ of type $Y$ (see Table~\ref{table:example-lower-bound} in the appendix for an example with $\delta=4$).

Since $\delta-1$ and $\delta$ are co-prime, 
$X[i]$ and $Y[i]$ are numerals if and only if $i = 1 + jk(\delta-1)$ for some integer $j \geqslant 0$ and $j \leqslant \min(|X|,|Y|)$.
Let us suppose that $X$ and $Y$ have lengths of at least $1+\delta(\delta-1)$.
We can then say more generally that, for every $d \in [0..\delta]$, there exists an $i \in [1..\delta(\delta-1)]$ such that $X[i]$ and $Y[i+d]$ are numerals.

Thus, if we placed a tile of type $Y$ (resp. $X$) at position $1$, the first next position at which we can place a tile of type $X$ (resp. $Y$) is in the range $[|Y|-2k..|Y|]$ (resp. $[|X|-2k..|X|]$). 
Suppose that we have placed a tile $Y_1$ of type $Y$ first, and a tile of type $X$ next at the first available position.
Furthermore, suppose that we now want to place another tile $Y_2$ of type $Y$. 
Since two tiles of the same type do not match, the greedy algorithm shifts $Y_2$.
However, we need to shift it to the last $\delta$ positions of the placed tile of type $X$, otherwise there is a mismatch since some positions of $Y_2$ are in conflict with the already placed $X$.
If we continue placing tiles of type $X$ and $Y$ in alternating order, we end up with a placement of length $\Om{ (\delta-2)(|X| - 2k) + (\delta-1)(|Y| - 2k) }$.
Setting the lengths of $X$ and $Y$ to $\delta(\delta+1)+1$, the placement length is $\Om{\delta^3}$.

However, an optimal placement has length $\Ot{\delta^2}$. 
To see this, first place all $\delta-2$ tiles of type $X$ at positions $1$, $2$, $\ldots$, $1+\delta-2$.
This gives a placement of length $1+\delta-2 + |X|-1 = 1+\delta-2 + \delta(\delta+1)+1-1 = \delta^2+2k-1$.
Next, place all tiles of type $Y$ at subsequent positions $\delta^2+2k$, $\delta^2+2k+1$, $\ldots$, $\delta^2+3k-2$.
The total length of this placement thus is $\delta^2+3k-2 + |Y| = \Ot{\delta^2}$.
Since this placement is without holes, it is optimal by \cref{lemNoHolesOpt}.
In total, the placement of the greedy algorithm is at least $\Om{\delta}$ times larger than the optimal solution of length $\Ot{\delta^2}$.
Since $\delta$ is arbitrary, the greedy algorithm selecting tiles of alternating type $X$ and $Y$ has an approximation ratio of at least $\sqrt{m}$, where $m = \Ot{\delta^2}$ being the length of the optimal solution.

Finally, we can scale the lengths of each $X$ and $Y$ tile individually to let Ziegler's method behave like the leftmost-fit greedy strategy.
Since Ziegler's method greedily selects the remaining tile with the largest number of numerals,
we adjust the length of every $X$ and $Y$ tile individually such that the method will select $X$ and $Y$ in an alternating order.
To this end, define $X_x = (x{\star}^{\delta-2})^{2x+1} x$ and 
$Y_y = (y {\star}^{\delta-1})^{2y} y$.
This lets Ziegler's precomputation step arrange the tiles in a list $[Y_{\delta-1}, X_{\delta-2}, Y_{\delta-2}, X_{\delta-3}, \ldots, X_1, Y_1]$.
Like before, an optimal solution is to combine all $X_x$ and $Y_y$, each separately into two placements without holes, each of length $\Ot{\delta^2}$.
Ziegler's method here works like the alternating algorithm, resulting in a string of length $\Om{\delta^3}$. 

\subparagraph*{Approximation Ratio for \probMaxShift}
From the previous discussion, we can derive the same approximation ratio for \probMaxShift as well:
If we write each tile as a row in a matrix (appending $\bsq{\star}$s at the end of shorter tiles to make them all of the same length), then both greedy strategies have a maximal shift of $\Om{\delta^3}$. 
The solution without holes needs a maximal shift of $\Oh{\delta^2}$ to move the last $Y$.
Therefore, the approximation ratio is at least $\Om{\sqrt{m}}$.

\subsection{Proof of \cref{thmApproxSqrtM}: Upper Bound}
\label{section:approx-upperbound}
We start with \probMinLength{}, and prove the upper bound by contradiction. 
Given that the length of the optimal solution is $m$,
assume by contradiction that one of the greedy strategies generates a placement $P$ longer than $(3m+1)\sqrt{m} \in \Ot{m \sqrt{m}}$. 
 The coarse idea is to split $P$ into two strings $U$ and $V$ such that we can show that $U$ or $V$ have at least $m$ many numerals, which contradicts the optimal length $m$ by the pigeonhole principle.
 In detail, we split $P$ into a prefix $U$ of $2m\sqrt{m}$ positions and a suffix $V$ of 
 $(m+1)\sqrt{m}$ positions such that $P = U \cdot V$.
 We further decompose $U$ into chunks, each of length $2m$.
 Let $C$ denote the chunk in $U$ with the minimum number of numerals, which we define to be $u$.

We consider tiles that the greedy strategy could not place in $U$ (nor at the border of $U$ and $V$) and thus
got placed completely into $V$. 
 Let $P$ be one of these tiles with the least number of numerals, which we define as $v$.
 Since the optimal solution has length $m$, all tiles have lengths of at most $m$.
 In particular, $P$ is completely contained in $V$, which is composed of at least $\sqrt{m}+1$ tiles.

Since $P$ could not be placed in $U$, $P$ cannot match at any starting position in $C$.
Each mismatch can be expressed by the ranks of numerals of the $u$ numerals in $C$ and the $v$ numerals in $P$.
Using a counting argument for the ranks $\in [1..u] \times [1..v]$, the product $xy$ must be at least as large as $2m$, the length of $C$.
In particular, $u$ or $v$ must be at least $\sqrt{m}$, for which we have a short case analysis.

 \begin{itemize}
 	\item If $u$ is at least $\sqrt{m}$, then $U$ must have at least $m$ numerals.
		That is because it has $\sqrt{m}$ chunks and each chunk has at least as many numerals as $C$ by choice of $C$.
	\item If $v$ is at least $\sqrt{m}$, then $V$ must have at least $m$ numerals.
		That is because at least $\sqrt{m}$ tiles have been placed into $V$, 
		and each has at least as many numerals as $P$ by choice of $P$.
 \end{itemize}

In both cases, the number of numerals in $P$ must exceed $m$.
We thus obtain a contradiction that $P$ and the optimal solution must contain the same number of numerals, so the optimal solution must be longer than $m$.

\subparagraph*{Approximation Ratio for \probMaxShift}
We obtain the approximation ratio $O(\sqrt{m})$ for \probMaxShift{} by following the same steps as above by letting $m$ denote the maximum shift of the optimal solution and consider the shift of $P$.

\section{Dynamic-Programming Algorithms}
\label{section:dp}

We present our dynamic programming algorithms for \probMinLength (which can also be adapted to \probMaxShift). 
The main ingredient for our dynamic programming algorithms is a pattern matching algorithm that respects wildcards,
which we use in the following lemma for finding all positions at which we can merge a tile with a placement.

\begin{lemma}\label{lemPartialWordMatching}[Pattern Matching on Partial Words~\cite{clifford07simple}]
    Given a placement $P$ of length $n$ and a tile $T$ of length $m$, 
    we can find in $\Oh{n \lg m}$ time all positions of $P$ at which we can insert $T$.
\end{lemma}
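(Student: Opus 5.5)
We reduce the question to wildcard pattern matching, which is how \citet{clifford07simple} solve it, and then adapt it to our setting where every numeral in $P$ is a row index and every numeral in $T$ is a single symbol $c$. A position $x$ is admissible for inserting $T$ exactly when, for every $i \in [1..m]$ with $T[i] = c$, the symbol $P[x+i]$ is the wildcard $\star$. Equivalently, the number of collisions
\[
  C[x] = \sum_{i=1}^{m} t_i \, p_{x+i}
\]
must vanish. Here $t_i = 1$ if $T[i]$ is a numeral and $0$ otherwise, and $p_j = 1$ if $P[j] \neq \star$ and $0$ otherwise. Since two numerals never coexist in one cell of a placement, any numeral of $T$ landing on any numeral of $P$ is a conflict. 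So no equality test on the symbols is needed, and we only need the sum of products of two $0/1$ vectors. (If we also wanted to allow equal numerals to overlap, we would instead use the standard three-convolution identity $\sum_i t_i p_{x+i}(T[i]-P[x+i])^2$, with $t_i,p_j$ set to $0$ at wildcards. That sum also decomposes into a constant number of cross-correlations.)

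The plan is therefore the following. First, build the indicator vectors $t$ and $p$ in $O(n+m)$ time. Next, compute the cross-correlation $C$ for all $x \in [0..n-m]$. Finally, report every $x$ with $C[x]=0$. If insertion may also extend past the borders of $P$, as in \probMinLength{}, we pad $P$ with $m$ wildcards on each side first. This changes $n$ only to $n+2m$, and we may assume $m \le n$ here (otherwise the roles are symmetric).

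The only nontrivial step is achieving $O(n \lg m)$ rather than $O(n \lg n)$ for the correlation. For this we use the standard blocking trick. Cut the padded $p$ into $O(n/m)$ overlapping windows of length $2m$, where consecutive windows start $m$ positions apart. Correlate each window with the reversed $t$ via FFT in $O(m \lg m)$ time. Each alignment $x$ is fully contained in exactly one window, namely the window starting at $m\lfloor x/m\rfloor$, so it suffices to read $C[x]$ from that window's result. The total cost is $O((n/m)\cdot m\lg m)=O(n\lg m)$.

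The entries of $C$ are integers bounded by $m$, so exact arithmetic (or FFT over a suitable prime field) keeps the zero-test reliable. This precision bookkeeping is the one point we expect needs care. Otherwise the statement follows directly from the algorithm of \cite{clifford07simple}.
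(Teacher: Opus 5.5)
Your proposal is correct and is essentially the argument behind the cited result of \citet{clifford07simple}, which is the paper's only justification: wildcards are zeroed out, conflicts are detected by an FFT-based cross-correlation, and the standard blocking into windows of length $2m$ gives $O(n \lg m)$. Your reduction of the three-convolution identity to a single $0/1$ correlation is justified by the paper's own remark that all numerals of $P$ may be replaced by one symbol matching no numeral of $T$. Two cosmetic points remain. First, an alignment $x$ with $m \mid x$ lies in two windows, so you need ``at least one'' rather than ``exactly one''. Second, the bound genuinely needs $m \le n$, since with overhang allowed and $m \gg n$ the output alone can have $\Theta(m)$ positions; this holds in the paper's application, where $|T| \le \ell = |Y|$.
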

A crucial observation is that we find these insertion positions even if we replace all numerals in the placement $P$ by a new numeral $-1$ that does not match with any other numeral (but matches $\star$).
Thus, in the following, it suffices to regard $P$ as a binary string of the alphabet $\{-1, \star\}$.

We first present a DP algorithm for the special case where we only have one type of tile, showing that in this case the problem becomes polynomial-time solvable when the tile length is logarithmic in the number of tiles. Then, in \cref{section:dp-general} we present the algorithm for the general case. 

\subsection{One Distinct Tile with Logarithmic Length}\label{secDPLogLength}

In this section, we assume that all tiles have the same tile type.
With that restriction, we can devise a DP-algorithm that solves \probMinLength{} in time polynomial in the number of tiles $n$, if the tile length $\ell$ is logarithmic in $n$.
Let us consider all different placements we can produce with $i$ tiles for increasing $i \in [1..n]$:
each placement has a length of at most $i\ell$, and their number  can be exponential ($\leqslant 2^{i\ell}$), which is prohibitive.
The insight is that we only need to keep track of the last $\ell$ bits of each placement for a fixed $i$, if we prolong a placement always to its right. 
Note that this choice is without loss of generality (by symmetry).
So, given a fixed placement $P$, we try to obtain all possible different $\ell$-length suffixes by combining $P$ with a newly chosen tile.
Among all the placements with $i$ tiles having the same $\ell$-length suffix, we only keep the shortest one.
Thus, for each $i$, we produce $O(2^{\ell})$ placements from the placements of $i-1$ (or the empty string if $i=1$).
For $i = n$, we can report the shortest placement among all produced ones.

\subparagraph*{Algorithmic Details.}
In concrete terms,
the DP table is a table $X[0..n][0..2^\ell-1]$.
For an integer $i \in [1..n]$ and an $\ell$-length bit vector $Y$,
$X$ obeys the invariant that $X[i][Y] = \lambda > 0$ 
if and only if there is a placement of length $\lambda$ using $i$ tiles having an $\ell$-length suffix equal to $Y$, where $\lambda$ is chosen to be minimal.
We start with $X[0][0] = 0$, and fill $X[i][\cdot]$ as follows:
For each defined value $X[i-1][Y]$, compute all possible $\ell$-length suffixes $Z = Y \bowtie T$, where $T$ is the input tile.
The set $X \bowtie Y$ contains all placements $P$ of $X$ and $Y$ such that $X$ has an earlier match than $Y$ in $P$ and there is no gap between the matching substrings of $X$ and $Y$ in $P$.
Formally, for two strings $A$ and $B$, let $\ell \in [0..|A|]$ such that
for all $i \in [\ell+1..|A|]$, $A[i] = \star{}$ or $B[i-\ell] = \star$ holds.
Written the other way around, there is no $i \in [\ell+1..|A|]$ such that both $A[i]$ and $B[i-\ell]$ are numerals.
For each such $\ell$ define
$S_\ell[1..\max(|A|,|B|+\ell)]$ such that
\[
S_\ell[i] = 
\begin{cases}
    A[i] &\text{~if~} A[i] \neq \star \wedge i \in [1..|A|], \\
    B[i-\ell] &\text{~if~} B[i-\ell] \neq \star \wedge i \in [\ell+1..\ell+|B|], \\
    0 &\text{~else},
\end{cases}
\]
for every $i \in [1..\max(|A|,|B|+\ell)]$.
Then $A \bowtie B := \{ S_\ell \}_\ell$ is the set of all such constructed $S_\ell$.
By \cref{lemPartialWordMatching}, we can compute $A \bowtie B$ in $\Oh{|A| \lg |B|}$ time.

\begin{example}
	For $A = \texttt{aa$\star{}$a$\star{}$a}$ and $B = \texttt{b$\star{}$b}$,
	we have $A \bowtie B = \{ \texttt{aababa}, \texttt{aa$\star{}$abab}, \texttt{aa$\star{}$a$\star{}$ab$\star{}$b} \}$.
\end{example}

Let $d$ be the number of the new positions added to the placement by merging $Y$ and $T$:
if $X[i][Z]$ is undefined or $X[i][Y] + d < X[i-1][Z]$, we set $X[i][Z] \gets X[i-1][Y] + d$.
After filling $X[i][\cdot]$ we recurse for increasing $i$ until $i = n$.
Finally, in $X[n][\cdot]$, we return the minimum defined value in $X[n][\cdot]$ as the length of the optimal placement using all $n$ tiles.
The steps are summarized in pseudocode in \cref{algminLengthDP} in the appendix.

\subparagraph*{Correctness}
The correctness of the algorithm follows by induction on $i$.
For $i=0$, the only possible placement is the empty placement of length $0$ with an $\ell$-length suffix of $\bsq{\star}$s.
We here interpret the suffix as the non-trimmed placement prior to removing all surrounding positions with $\bsq{\star}$s. 
Suppose that the invariant holds for $i-1$.
Then, for each defined $X[i-1][Y]$, we compute all possible $\ell$-length suffixes $Z$ by merging $Y$ with the tile $P$. 
Now suppose that there is actually a placement $P$ of $i$ tiles with a suffix of $Y$ using less than $X[i][Y]$ positions.
Since it is possible to extract the placement of the rightmost tile $T$ in $P$, we can remove $T$ from $P$ and obtain
a placement of $i-1$ tiles with a suffix of some $Y'$ such that merging $Y'$ with $T$ gives $Y$.
However, by the induction hypothesis, we have $X[i-1][Y'] \leqslant |P| - d$, where $d$ is the number of new positions added by merging $Y'$ and $T$, a contradiction.
Informally, we cannot omit a possible solution by only tracking the $\ell$-length suffixes of placements since
otherwise we would have already considered a prefix of that placement when processing less tiles.

\subparagraph*{Complexity}
We compute $Z$ from $Y$ and the tile $P$ by leveraging \cref{lemPartialWordMatching} to find all positions at which we can merge the tile with the placement represented by $Y$.
Since $Y$ has a length of at most $\ell$, we need $O(2^{\ell} \cdot \ell \log \ell)$ time for each $i$, thus $O(n \ell 2^\ell \log \ell)$ total time.
As a side note, it suffices to keep only the latest computed row $X[i][\cdot]$ in memory, so the space is $O(2^{\ell})$.
We conclude that \probMinLength{} is in polynomial time solvable if all tiles are non-distinct and $\ell = O(\lg n)$:

\begin{theorem}\label{thmminLengthPoly}
	\probMinLength{} with only one distinct tile of length $\ell$ is solvable in $O(n^2 2^\ell \ell \log \ell)$ time.
	In particular, it is fix-parameter tractable in $\ell$ and polynomial-time solvable if $\ell = O(\lg n)$.
\end{theorem}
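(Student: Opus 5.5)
We will prove \cref{thmminLengthPoly} by analysing the DP over states $X[i][Y]$ just described. The idea is that, when all tiles have the same type, only the $\ell$ rightmost cells of a placement can interact with tiles placed later.

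\textbf{Step 1: normal form.} We first argue that every feasible placement can be assumed to have a normal form. Order the tiles by their starting positions $x_1\leqslant x_2\leqslant\dots\leqslant x_n$. Since all tiles have the same type and each starts with a numeral, no two tiles start at the same position, so $x_1<x_2<\dots<x_n$. Every tile has length exactly $\ell$, so tile $j$ occupies $[x_j..x_j+\ell-1]$. Hence the cells that tile $j$ can collide with lie inside the window $[x_j..x_{j-1}+\ell-1]$, which is contained in the last $\ell$ cells of the partial placement formed by tiles $1,\dots,j-1$. Gaps between consecutive tiles never help: if $x_j > x_{j-1}+\ell$, we can shift tiles $j,\dots,n$ left without creating a conflict, since they are disjoint from the earlier tiles, and this shortens the placement. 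So we may assume the optimum is built by appending tiles left to right, each merged onto the $\ell$-suffix $Y$ of the current placement via $Y\bowtie T$. The increase in length is exactly the number $d$ of new positions.

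\textbf{Step 2: correctness.} By induction on $i$ we prove that $X[i][Y]$ equals the minimum length of a normal-form placement of $i$ tiles whose last $\ell$ cells, padded with $\star$ on the left, equal $Y$. For the upper bound, every merge the algorithm performs is a valid placement. For the lower bound, remove the rightmost tile from an optimal placement. By Step 1 what remains is a normal-form placement of $i-1$ tiles with some suffix $Y'$. Its length is at least $X[i-1][Y']$, and extending it by $T$ costs the same $d$ no matter what lies before the suffix. The key fact here, and the point that needs care, is that the suffix $Y'$ determines both the new suffix $Z$ and $d$. This holds because $T$ starts strictly after $x_{i-1}$, so it cannot reach back beyond the final $\ell$ cells. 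The answer is $\min_Y X[n][Y]$.

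\textbf{Step 3: running time.} Each layer has at most $2^\ell$ states. For each state, \cref{lemPartialWordMatching} applied to strings of length $O(\ell)$ gives all admissible offsets in $O(\ell\log\ell)$ time. There are $O(\ell)$ offsets, and forming each new suffix and its $d$ costs $O(\ell)$ with bit operations. Hence one layer takes $O(2^\ell\ell\log\ell)$ time up to these factors, and $n$ layers give $O(n2^\ell\ell\log\ell)$, which lies within the claimed $O(n^22^\ell\ell\log\ell)$. The extra factor of $n$ comfortably absorbs the bookkeeping, including storing lengths up to $n\ell$ and updating the table. The bound is polynomial in $n$ times $f(\ell)$, so the problem is FPT in $\ell$. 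For $\ell=O(\lg n)$ we have $2^\ell=n^{O(1)}$, which gives polynomial time.

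The main obstacle is Step 1, namely justifying that restricting to left-to-right appending with no gaps loses no optimal solution. This requires that the tiles have equal length so that the starting order is also the ending order; otherwise a later tile could reach back further than $\ell$ cells.
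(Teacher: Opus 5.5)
Your proposal is correct and follows the paper's proof. It uses the same DP over $\ell$-bit suffixes, the same induction that deletes the rightmost tile, and the same per-layer cost via \cref{lemPartialWordMatching}; your Step~1 additionally spells out (through equal tile lengths) why the $\ell$-suffix suffices, which the paper only asserts. The one loose spot is the cost accounting: $O(\ell)$ offsets at $O(\ell)$ each is $O(\ell^2)$ per state, and the spare factor $n$ does not absorb this once $\ell \gg n\log\ell$, so instead note that an $\ell$-bit suffix fits in $O(1)$ machine words (implicit anyway when indexing a $2^\ell$-entry table), which makes each offset test and each new $Z$ an $O(1)$ word operation, with $d$ simply equal to the offset.
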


\subsection{General DP-algorithm}
\label{section:dp-general}
Here we present the DP algorithm for a general instance of \probMinLength{}.
The main obstacle in generalizing \cref{thmminLengthPoly} to more tile types is that the order in which we choose the tiles now becomes important.

Let us assume that there are $\kappa$ tile types in our instance, 
each with cardinality $c_i \geqslant 1$ for $i \in [1..\kappa]$ (i.e., there are $c_1$ tiles of the first kind, $c_2$ tiles of the second, and so on).
Then, the number of ways to sort all $n$ tiles is $\frac{n!}{c_1! c_2! \cdots c_\kappa!}$, which is exponential in $n$ for general $\kappa$.
However, as we will see, it suffices to only track 
the \emph{Parikh vector} of the used tiles instead of their order. 
A Parikh vector of a multiset of tiles is a vector of length $\kappa$ where the $i$-th entry counts the number of tiles of type $i$ in the multiset.
Let $\vec{p}_0 = (0,0,\ldots,0)$ be the empty Parikh vector of length $\kappa$ and $\vec{p}_n = (c_1, c_2, \ldots, c_\kappa)$ be the Parikh vector of all input tiles.

As before, we prolong a partial solution to the right by adding a new tile of a given tile type at the $\ell$-length suffix.
We then track partial solutions for each Parikh vector individually:
instead of considering a table of the form $X[i][\cdot]$ for integer $i$, we substitute $i$ with a Parikh vector $\vec{p}$ representing the amounts of types of tiles added so far, and 
we fill $X[\vec{p}][\cdot]$ using all solutions $X[\vec{p'}][\cdot]$ such that $\vec{p}$ can be obtained by adding one tile of a given tile type to $\vec{p'}$.
In particular, $|\vec{p'}| = |\vec{p}| - 1$.
By doing so, $X$ holds the following invariant, which can be proven by induction on $|\vec{p}|$:
given $X[\vec{p}][Y] = \lambda > 0$, there is a placement of length $\lambda$ using the multi-subset of tiles represented by $\vec{p}$ having an $\ell$-length suffix equal to $Y$, where $\lambda$ is chosen to be minimal.
This invariant leads to the correctness of the algorithm, 
which outputs the minimum defined value in $X[\vec{p}_n][\cdot]$ as the length of the optimal placement using all $n$ tiles.
We summarize the steps in pseudocode in \cref{algminLengthDPGeneral}.

\subparagraph*{Correctness}
The correctness follows by adapting the induction proof for \cref{thmminLengthPoly}.
Instead of performing induction on $i$, we do so on $|\vec{p}|$.
Given that the induction invariant holds for all $\vec{p'}$ with $|\vec{p'}| \leqslant |\vec{p}|-1$,
suppose that there is actually a placement $P$ of tiles drawn as specified by $\vec{p}$ with a suffix of $Y$ using less than $X[\vec{p}][Y]$ positions.
Since it is possible to extract the placement of the rightmost tile $T$ in $P$, we can remove $T$ from $P$ and obtain 
a placement of the multi-subset of tiles represented by some $\vec{p'}$ with a suffix of some $Y'$ such that merging $Y'$ with $T$ gives $Y$.
However, by the induction hypothesis, we have $X[\vec{p'}][Y'] \leqslant |P| - d$, where $d$ is the number of new positions added by merging $Y'$ and $T$, a contradiction.
Consequently, selecting among the same multiset of tiles the shortest placement yields optimality, independently of their order. 

\subparagraph*{Complexity}
There are $n$ tiles, and each tile can be represented by a binary number with $2^\ell$ bits.
Thus, the number of different tile types $\kappa$ is at most $2^\ell$.
A Parikh vector thus has length at most $2^\ell$.
Since a number in a Parikh vector has the domain $[1..n]$, 
the count of all Parikh vectors is upper bounded by $n^{2^\ell}$.
For each Parikh vector $\vec{p}$, we compute $O(2^{\ell})$ suffixes $Z$ from each defined $X[\vec{p'}][Y]$ using \cref{lemPartialWordMatching}.
Thus, the total time is 
$O(n^\kappa \ell n 2^\ell n)$.
Like for one tile-type case, for the computation of all rows $X[\vec{p}][\cdot]$ with $|\vec{p}| = i$, it suffices to keep only the rows $X[\vec{p'}][\cdot]$ with $|\vec{p'}| = i-1$ in memory.
However, there can be many such Parikh vectors with the same length.
While the trivial maximum number is $O(n^\kappa)$, the exact number to compute is nontrivial, cf.~\cite{gluck20integer}.
We conclude that \probMinLength{} is in polynomial time solvable if all tiles are non-distinct and $\ell = O(\lg n)$:

\begin{theorem}\label{thmMinLengthDPGeneral}
\probMinLength{} with $n$ tiles of length $\ell$ drawn from $\kappa$ tile types is solvable in $O(n^\kappa \ell n 2^\ell n) \subset O(n^{2^\ell} \ell n 2^\ell n)$ time.\
It is polynomial-time solvable if $\ell = O(\lg \lg n)$, or if
$\ell = O(\lg n)$ and $\kappa = O(\lg n)$.
\end{theorem}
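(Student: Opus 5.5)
The plan is to assemble the theorem from the Parikh-vector dynamic program described just above. There are three parts: correctness of the invariant, a count of the DP states and transitions, and a specialization of the bound to the two parameter regimes.

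\emph{Correctness.} We use induction on $|\vec{p}|$. The invariant is that $X[\vec{p}][Y]$ equals the minimum length of a placement of the multiset encoded by $\vec{p}$ whose $\ell$-length (untrimmed) suffix is $Y$. The base case is $\vec{p}_0$, where only the empty placement exists. For the inductive step, fix an optimal placement for $(\vec{p},Y)$ and remove the tile $T$ whose match starts rightmost. What remains is a placement for some $\vec{p'}=\vec{p}-e_t$ with some suffix $Y'$.

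The key fact needed here is that the merge of $T$ only interacts with the last $\ell$ positions. Since every tile has length at most $\ell$ and $T$ starts at or after every other tile's start, all positions $T$ can collide with lie within the last $\ell$ positions of the remaining placement. Hence $Y'\bowtie T$ contains the restriction yielding $Y$, and the number $d$ of added positions depends only on $Y'$ and the shift. Minimality then follows by the same exchange argument as for \cref{thmminLengthPoly}. Conversely, every value written into the table is realized by an actual placement, so the recurrence never underestimates. Order independence is automatic, because a state records only the multiset of tiles used, not their order.

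\emph{Running time.} The number of Parikh vectors is $\prod_i (c_i+1)\le (n+1)^\kappa = O(n^\kappa)$ for fixed $\kappa$. For each $\vec{p}$ there are at most $\kappa\le n$ predecessors $\vec{p'}$ and at most $2^\ell$ suffixes $Y$. For each such pair, \cref{lemPartialWordMatching} enumerates the $O(\ell)$ merge shifts in $O(\ell\lg\ell)$ time; a naive $O(\ell^2)$ check also suffices, since either is absorbed in a factor of $\ell n$. This gives $O(n^\kappa\cdot n\cdot 2^\ell\cdot \ell n)$, which is the stated bound. Since $\kappa\le 2^\ell$, the bound $O(n^{2^\ell}\ell n 2^\ell n)$ follows.

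\emph{Specializations.}
\begin{itemize}
\item If $\ell\le c\lg\lg n$, then $2^\ell\le(\lg n)^c$, and the Parikh-vector count $n^{\kappa}$ must be bounded with care. The crude bound $n^{2^\ell}=n^{(\lg n)^c}$ is superpolynomial. I would instead use $\prod_i(c_i+1)\le\binom{n+\kappa}{\kappa}$ together with $\kappa\le 2^\ell$, and I expect this to be the main obstacle: for the claim to hold, $\ell$ must be small enough (e.g.\ $\ell\le\lg\lg n$ with the constant giving $\kappa=O(1)$ types up to polynomial blow-up), or one must read the statement with the constant such that $2^\ell=O(1)$. Failing that, I would reinterpret the bound as $n^{O(\kappa)}$ with $\kappa$ bounded by a constant depending on the hidden constant.
\item If $\ell,\kappa=O(\lg n)$, then $2^\ell$ is polynomial, and $\binom{n+\kappa}{\kappa}\le (e(n+\kappa)/\kappa)^\kappa$ is quasi-polynomial. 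Here too I would need to pin down the precise constant regime in which this becomes polynomial.
\end{itemize}
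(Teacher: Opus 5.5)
Your correctness argument and the $O(n^\kappa\ell n2^\ell n)$ count are the paper's proof. You use induction on $|\vec p|$, delete the rightmost tile, and reuse the exchange argument of \cref{thmminLengthPoly}. You then multiply the $O(n^\kappa)$ Parikh vectors by $\kappa\le n$ tile types, $2^\ell$ suffixes and the merge cost of \cref{lemPartialWordMatching}. No ``fixed $\kappa$'' assumption is needed here, since $\kappa\le n$ gives $(n+1)^\kappa\le e\,n^\kappa$. You are more careful than the paper in three places: you delete the tile whose match \emph{starts} rightmost, you check that it can only collide inside the last $\ell$ positions, and you state the realizability direction.

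One repair is needed, and the paper needs it too. $Y'\bowtie T$ lets $T$ start at most one position past the current right end. So when tiles have different lengths, the invariant over \emph{all} placements with suffix $Y$ fails for suffixes that force a gap. An example: maximum tile length $5$, two tiles of type $11$, and suffix $\star\star\star\,11$. State the invariant for gap-free placements instead, meaning placements in which each tile, taken in order of starting position, begins at most one position after the end of the earlier ones. This class is closed under deleting the rightmost-starting tile. It also contains every optimal placement, since closing a gap shortens the placement.

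The real problem is the second sentence of the theorem, where you stall. There the paper's proof offers nothing beyond the displayed bound. Your suspicion is justified, and no sharper count can help, because the DP really processes every Parikh vector $\vec p\le\vec p_n$: each such multiset has a gap-free placement, e.g.\ the concatenation of its tiles.

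Here is an instance. Take $\ell=\lceil\lg\lg n\rceil+2$, which allows $2^{\ell-2}\ge\lg n$ tile types of length $\ell$. Use $\kappa=\lg n$ of them with $n/\lg n$ tiles each. Both hypotheses hold: $\ell=O(\lg\lg n)$, and also $\ell,\kappa=O(\lg n)$. Yet there are $(n/\lg n+1)^{\lg n}\ge n^{\lg n-\lg\lg n}$ Parikh vectors. More generally, balanced multiplicities give $(n/\kappa+1)^\kappa$ Parikh vectors, which is superpolynomial for every $\kappa=\omega(1)$.

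So there is no constant regime to pin down. The bound gives polynomial time when $\kappa=O(1)$ and $\ell=O(\lg n)$, in particular when $\ell=O(1)$, since $\kappa\le 2^{\ell-1}$. In the two regimes stated it gives only quasi-polynomial time $n^{(\lg n)^{O(1)}}$. Neither your proposal nor the paper establishes the polynomial-time claim as written. You should state this corrected corollary instead of trying to prove the original one.
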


\begin{algorithm}[t]
\caption{DP-algorithm for \probMinLength{} with distinct tiles.}
\label{algminLengthDPGeneral}
\begin{algorithmic}[1]
\State \textbf{Function} minLengthDPGeneral($n$ tiles of length at most $\ell$)
\State Compute $\vec{p}_n$ from the input tiles, let $T_1, T_2, \ldots, T_\kappa$ be the different tile types
\State initialize $X[\vec{p}_0][0] \gets 0$
\For{each Parikh vector $\vec{p}$ with $X[\vec{p}][Y]$ defined and $|\vec{p}|$ from $0$ to $n-1$}
	\For{each $i \in [1..\kappa]$ with $\vec{p}[i] < \vec{p_n}[i]$}
		\State let $\vec{p'}$ be $\vec{p}$ with $\vec{p'}[i] = \vec{p}[i] + 1$
		\For{each distinct $\ell$-length suffix $Z$ among the strings in $Y \bowtie P$}
			\State Let $d$ be the number of new positions added by merging $Y$ and $P$
			\If{$X[\vec{p'}][Z]$ is undefined or $X[\vec{p}][Y] + d < X[\vec{p'}][Z]$}
				\State $X[\vec{p'}][Z] \gets X[\vec{p}][Y] + d$
			\EndIf
		\EndFor
	\EndFor
\EndFor
\State \Return the minimum defined value in $X[\vec{p}_n][\cdot]$
\end{algorithmic}
\end{algorithm}

\subparagraph*{Implementation Details}
Instead of creating the whole DP-matrix $X$ as a two-dimensional array,
we can implement $X$ by hash tables $H_i$ representing $X[\vec{p}][\cdot]$ for all $\vec{p}$ with $|\vec{p}| = i$.
In fact, we only need to keep two hash tables $H_i$ and $H_{i+1}$ in memory at the same time.
Computing $H_{i+1}$ from $H_i$ can be done as with $X$ by scanning all entries in $H_i$.

\subparagraph*{Adaptation to \probMaxShift{}}
The DP algorithm for \probMinLength{} can be adapted to solve \probMaxShift{} as well.
The change is straightforward: instead of storing the minimum length of a placement in $X[\cdot][\cdot]$, we store the minimum maximal shift of a placement.
When merging two placements, we compute the maximal shift of the merged placement and update $X[\cdot][\cdot]$ accordingly.

\section{Parameterized Hardness}
\label{section:hardness}

It is natural to study parameters for which \probMinLength{}, and \probMaxShift{}, can become easier to compute.
This section is dedicated to providing a full study of the parameterized complexities of these problems with respect to several natural parameters, namely: 
\begin{itemize}
	\item the maximum length $\ell$ of a tile, 
	\item the maximum number $\zeta$ of numerals in a tile,
	\item the number of tile types $\kappa$, \item the length $m$ of the optimal placement  (for \probMinLength), and
	\item the maximum shift $\rho$ in an optimal solution (for \probMaxShift).
\end{itemize}
\cref{table:parameterized-results} shows a summary of all parameterized results for the two problems.

\begin{table}[t]
	\begin{minipage}{0.42\linewidth}
\caption{Summary of positive and negative parameterized results for \probMinLength and \probMaxShift. 
The parameters denote the following. $\ell$ is the maximum length of a tile, $\zeta$ is the maximum number of numerals per tile, $\kappa$ is the number of tile types, $m$ is the optimal placement size, and $\rho$ is the maximum shift.  $\dag$ or $\mathparagraph$ denotes results only concerning \probMinLength{} or \probMaxShift{}, respectively.}
\label{table:parameterized-results}
	\end{minipage}
	\hfill
	\begin{minipage}{0.55\linewidth}
\begin{tabular}{l || l | l | }& {Polynomial-time for} & {NP-hardness even for} 
\\ \hline\hline 
	\multirow{2}{*}{$\ell$}   & $\ell \in O(\lg \lg n)$ or  (\cref{thmMinLengthDPGeneral}) & \multirow{2}{*}{$\ell \in \Omega(\lg n)$~\cite{bannai24npcompleteness}}
\\ 
				  & $\ell \in O(\lg n) \wedge \kappa \in O(\lg n)$ &
\\ \hline

	\multirow{2}{*}{$\zeta$}      & \multirow{2}{*}{$\zeta=1$ (trivial)}                                                        & {$\zeta=2$ (\cref{thmNP-hard-two-numerals}) $^\dag$} 
	\\
				      & & {$\zeta=4$ (\cref{thmNP-hard-three-numerals}) $^\mathparagraph$}
\\ \hline

$\kappa$ & $ \kappa \in O(\lg n) \wedge \ell \in O(\lg n)$                           & {$\kappa=1$ (\cref{thmNP-hardOneTile})} \\ \hline

$m$      
         & $m = \ell$ (trivial) $^\dag$                                                  & $m\leqslant \ell +  2$~\cite{even77remarks}
\\ \hline

$\rho $     & $\rho=0$ (trivial) $^\mathparagraph$                                                       & $\rho\leqslant 2$~\cite{even77remarks} $^\mathparagraph$  \\

\end{tabular}
	\end{minipage}
\end{table}

Hardness with respect to some of these parameters follows from previous results. It is known that \probMinLength{} and \probMaxShift{} are NP-hard even when the width $\ell$ of the matrix is $\ell \in \Omega(\lg n)$~\cite{bannai24npcompleteness}.
\probMaxShift{} is also NP-hard when the maximum shift $\rho$ is at most two~\cite{even77remarks}, which translates to hardness even if $m \leqslant \ell + 2$ for \probMinLength{}.

The remaining parameters to study are then the maximum number $\zeta$ of numerals in a tile, and the number of tile types $\kappa$.
Unfortunately, in what follows, we show hardness results for both parameters for \probMinLength{}. 

\subsection{One distinct tile (parameter $\mathbf{\kappa = 1}$)}
\label{section:NP-hard-one-tile}

We show that \probMaxShift{} is NP-hard even if there is only one type of tile, of length $\Omega(n^2)$. 
This implicitly shows the same hardness for \probMinLength{} since the two problems \probMinLength and \probMaxShift are equivalent 
when there is only one tile type.

More formally, we prove the following:
\begin{theorem}\label{thmNP-hardOneTile}
	\probMinLength{} and \probMaxShift are NP-hard even if the input is composed of one tile type of length $\ell = \Omega(n^6)$. 
\end{theorem}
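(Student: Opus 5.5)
Our plan is a polynomial-time reduction from an NP-hard problem to the one-tile-type case. When all tiles share a type, a placement is a set of $n$ shifts $x_1,\dots,x_n$ such that the translates $T+x_i$ of the numeral-position set $T\subseteq[1..\ell]$ are pairwise disjoint. Numerals are irrelevant here, since any assignment of tiles to shifts works. So both \probMinLength{} and \probMaxShift{} ask whether we can choose $n$ shifts in a window $[0..\rho]$ such that all pairwise differences avoid the difference set $D=(T-T)\setminus\{0\}$. Equivalently, we want an independent set of size $n$ in the graph $G_{T,\rho}$ on $[0..\rho]$ whose edges join $a,b$ with $|a-b|\in D$.

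This explains why the two problems coincide for $\kappa=1$: the length of a placement is $\max x_i-\min x_i+\ell$ up to trimming, and we may normalize $\min x_i=0$. We will state this equivalence first and then work only with the "packing translates" formulation.

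For the reduction we plan to start from \textsc{Independent Set} (or \textsc{Clique}) on a graph $H$ with $k$ vertices, and to encode $H$ into a single set $T$. The idea is to use a Sidon-like (Golomb ruler) set of "vertex positions" $a_1<\dots<a_k$ with all pairwise differences distinct and large. Polynomially large Sidon sets exist, e.g.\ the Erdős–Turán construction $a_i=2pi+(i^2\bmod p)$ with magnitude $O(k^2)$. We then design $T$ so that $D$ contains exactly the differences $a_j-a_i$ for edges $\{i,j\}\in E(H)$, plus a controlled set of "forcing" differences. The forcing differences restrict every feasible shift to lie near one of the $a_i$; to build them, we will make $T$ itself contain a scaled copy of a structure (say the Sidon points scaled by a large factor $N$) whose difference set blocks everything except multiples of $N$ plus small offsets.

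With $\rho$ chosen so that exactly the shifts $\{a_i\}$ (in a fine grid) are eligible, choosing $n$ pairwise-compatible shifts corresponds to choosing $n$ vertices of $H$ with no edge between them. Nesting two scales (vertex positions of size $O(k^2)$, scaled again to kill cross terms) gives a tile length of about $k^{6}$. This matches $\ell=\Omega(n^6)$ if $n$ is on the order of $k$, and it is where we expect the exponent $6$ to come from.

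The main obstacle is the exact control of the difference set. $T-T$ contains all pairwise differences of $T$, not just the ones we intend, so every pair of gadget points produces unwanted forbidden differences. These could kill legitimate independent sets or, conversely, fail to forbid invalid shifts. We would first try to make every unintended difference either exceed $\rho$ (so it is irrelevant inside the window) or be non-representable as a difference of two eligible shifts. Both goals follow from the Sidon property, applied at each scale with the scales separated by a factor larger than the sum of all lower-scale magnitudes. Checking this "no accidental coincidence" condition carefully, and showing that every small-window solution snaps to the grid $\{a_i\}$, is the technical heart of the argument. The rest, namely polynomial size and the equivalence of the two objectives, is routine.
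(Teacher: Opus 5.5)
\textbf{There is a genuine gap.} Your reformulation matches the paper: pairwise disjoint translates of the numeral set, the forbidden set $D=(T-T)\setminus\{0\}$, and the equivalence of the two objectives for one tile type. So does your choice to reduce from \textsc{Clique}/\textsc{Independent Set}. The gap is the decoding step, and more careful checking will not close it, because the Sidon property is too weak.

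The constraints only involve pairwise differences, so you cannot force individual shifts onto the points $a_i$. After normalizing one shift to $0$, every other shift lies in the allowed difference set. Each shift therefore encodes a \emph{pair} of vertices, $c_i=a_{s(i)}-a_{t(i)}$, and you must show that all these pairs share a common endpoint. A further relation $c_i-c_j=a_y-a_z$ rewrites as $a_{s(i)}+a_{t(j)}+a_z=a_{t(i)}+a_{s(j)}+a_y$. This is an identity between three-term sums, which a Sidon set does not control.

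Here is a concrete failure. Take the Sidon set $\{0,1,3,7\}$, name vertices by their positions, and let $H$ be the $4$-cycle whose only non-edges are $\{0,3\}$ and $\{1,7\}$. Even with a perfect forcing gadget, the allowed nonzero differences are exactly $\pm3,\pm6$. The shifts $0,3,6$ have pairwise differences $3,3,6$, so three disjoint translates exist although $\alpha(H)=2$. The culprit is $0+0+7=3+3+1$.

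Such coincidences are unavoidable for your magnitudes. The three-term sums of an $O(k^2)$-bounded set take only $O(k^2)$ values, far fewer than the roughly $k^3/6$ multisets. Separating scales does not help either: it only splits a three-term identity into three-term identities at each scale.

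The paper instead uses the lemma of Codenotti et al. It gives integers $b_0<\dots<b_{v-1}$ below $\rho=2^{3\lceil\log_2 v\rceil}$ whose two-term and three-term sums are pairwise distinct. Then the identity above forces equality of multisets. From this, either all $s(i)$ or all $t(i)$ coincide, and the common vertex together with the other map gives a clique of size $n$.

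Your forcing gadget also does not do what you claim. The difference set of a scaled copy $N\cdot S$ consists only of multiples of $N$. So it forbids a few multiples of $N$ and nothing else, not ``everything except multiples of $N$''.

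What you need is to realize an \emph{arbitrary} prescribed set of forbidden differences inside the window, namely $\overline F=[1..\rho]\setminus F$ with $F=\{b_i-b_j : (i,j)\in E\}$. The paper uses one isolated pair per forbidden value, $X=\bigcup_{u}\{2u\rho,\,2u\rho+\overline f_u\}$. The blocks are $2\rho$ apart, so every cross-block difference has absolute value at least $\rho$. Hence the only nonzero differences of $X$ below $\rho$ in absolute value are the $\pm\overline f_u$, and the allowed differences between shifts are exactly $\pm F$.

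The exponent $6$ comes from $\ell=2\rho\mu\le 2\rho^2$, with $\rho=O(v^3)$ forced by the three-term-sum condition, not from nesting Sidon scales. With these two ingredients, a $B_3$-type set and the isolated-pair gadget, your outline becomes the paper's proof.
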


Let the tile type be $T[1..\ell]$. 
In what follows we show that combining $n$ tile instances of $T$ (with different numerals) into a placement such that each of them starts within the first $\ell$ positions in the placement is NP-hard. 
We reformulate our problem as follows.

\newcommand{\probMinDisjointShifts}{\textsc{minDisjointShifts}\xspace}
\problembox{\probMinDisjointShifts{}
	\\
	\textbf{Input:} Three positive integers $n$, $\rho$, and $\ell$ with $n$ such that $n \leqslant \rho \leqslant \ell \leqslant (2n)^6$ and 
	a tile type $T[1..\ell]$ representing the set $X = \{i \in [0..\ell-1] \colon T[i+1] \neq \star \}$.
	\\
	\textbf{Task:} Decide whether there exist non-negative integers $a_1, a_2, \ldots, a_n \leqslant \rho$ such that the $n$ sets $a_i + X := \{ a_i+x \ | \ x \in X  \}$ are pairwise disjoint.
}

We prove below that \probMinDisjointShifts{} is NP-complete.
This implies the hardness of \probMaxShift as well: 
an instance of \probMinDisjointShifts{} as defined above has a positive answer if and only if $n$ tiles of the same tile type $T$ and maximal shift parameter $\rho$
are an instance of \probMaxShift{} with a positive answer. 
Indeed, $X$ represents the positions of $T$ that are numerals; 
thus, $a_i+X$ corresponds to shifting all positions of $X$ by the same amount (or, equivalently, shifting an occurrence of $T$ by $a_i$). 
Furthermore, requiring  $(a_i + X) \cap (a_j+X) = \emptyset$ is equivalent to requiring no collisions between the numerals of different tiles. 
Therefore, the integers $a_i \le \rho$ can be seen as shifts of each occurrence of the tile such that there are no numerals' collisions, i.e., the task of \probMaxShift{}.

First, \probMinDisjointShifts{} is in NP:
If such integers $a_i$ exist, it suffices to guess them, and then store in a bit vector of length $\ell+\rho$ those elements that belong to any set $a_i + V$, and to check that they are pairwise distinct.

Conversely, our NP-hardness proof is a variant of a result from~\cite{codenotti98hardness} and consists in reducing the clique decision problem to our own decision problem.

\newcommand{\probClique}{\textsc{Clique}\xspace}
\problembox{\probClique{}
	\\
	\textbf{Input:} An undirected graph $\mathcal{G} = (V,E)$ with vertex set $V = [0..v-1]$ and a parameter $n \in [1..v]$.
	\\
	\textbf{Task:} Decide whether $\mathcal{G}$ has a clique of size $n$.
}

Let $\mathcal{G} = (V,E)$ be an instance of \probClique{} with $V = [0..v-1]$ and $n \in [1..v]$.
For $\rho = 2^{3 \lceil \log_2(v) \rceil}$, \citet[Lemma~1]{codenotti98hardness} showed that there exist integers $b_0 < b_1 < \ldots < b_{v-1}$ in the set $[0..\rho-1]$ that can be computed in time polynomial in $v$, and for which the sums $b_i + b_j$ and $b_i + b_j + b_k$ are pairwise distinct.

Now, consider the set $F = \{b_i - b_j \colon (i,j) \in E\}$; since the sums $b_u + b_v$ are pairwise distinct, so are the differences $b_u - b_v$. 
Hence $F$ has cardinality $|E|$.
Let $\overline{f}_0 < \overline{f}_1 < \cdots < \overline{f}_{\mu-1}$ be the elements of the set
$\overline{F} := [1..\rho] \setminus F$ of cardinality $\mu := \rho - |E|$.
Finally, we define the set 
\(
X := \bigcup_{u \in [0..\mu-1]} \{2 u \rho \} \cup \{2 u \rho + \overline{f}_u \} \subset [0..2 \rho \mu-1],
\)
and observe that we can represent $X$ with a tile $T$ of length $\ell = 2 \rho \mu$ and with numerals at positions in $X$.
It remains to prove that the triple $(n,\rho,\ell)$ for $\ell := 2 \rho \mu$ and the set $X$ form a positive instance of \probMinDisjointShifts{} if and only if $\mathcal{G}$ has a clique of size $n$.
For that, we observe that, by construction, the set $X-X$ of pairwise differences of elements in $X$ consists in the integer $0$, the elements of $\overline{F}$, and integers larger than $\rho$.

Suppose that $\mathcal{G}$ has a clique $\{v_1,v_2,\ldots,v_n\}$ of size $n$ with $v_1 < v_2 < \cdots < v_n$. 
Let $a_i = b_{v_n} - b_{v_i}$ for all $i \in [1..n]$.
The integers $a_i$ are smaller than $\rho$ and the sets $a_i + X$ are pairwise disjoint.
Indeed, assume for the sake of contradiction that there exist integers $x,y \in X$ and indices $i > j$ such that $a_i + x = a_j + y$, i.e., $b_{v_i} - b_{v_j} = a_j - a_i = x - y$.
Hence, $b_{v_i} - b_{v_j}$ is smaller than $\rho$ and belongs to $X-X$, which means it belongs to $\overline{F}$, which is absurd since $(v_i,v_j) \in E$.

Conversely, assume that there exist integers $c_1 > c_2 > \cdots > c_n \geqslant 0$, smaller than $\rho$, for which the sets $c_i + X$ are pairwise disjoint;
up to subtracting $c_n$ from each of them, we further assume that $c_n = 0$. 
By construction, each difference $c_i - c_j$ is smaller than $\rho$ but does not belong to $X-X$, i.e., belongs to $F$.
In particular, setting $j = n$ proves that $c_i$ belongs to $F$, i.e., is a difference $b_{\mathsf{s}(i)} - b_{\mathsf{t}(i)}$
for some mappings $\mathsf{s}, \mathsf{t} : [1..n] \to [0..v-1]$.

In what follows we show that one of the mappings $\mathsf{s}$ or $\mathsf{t}$ is constant and the other maps to a clique of size $n$ in $\mathcal{G}$.
Given any two indices $i < j$ smaller than $n$, the difference $c_i-c_j$ also belongs to $F$, i.e., $c_i - c_j = b_{\mathsf{s}(i)} - b_{\mathsf{t}(i)} - b_{\mathsf{s}(j)} + b_{\mathsf{t}(j)} = b_y - b_z$ for some indices $y, z$.
This equality rewrites as $b_{\mathsf{s}(i)} + b_{\mathsf{t}(j)} + b_z = b_{\mathsf{t}(i)} + b_{\mathsf{s}(j)} + b_y$, and therefore the multisets $\{\mathsf{s}(i),\mathsf{t}(j),z\}$ and $\{\mathsf{t}(i),\mathsf{s}(j),y\}$ coincide with each other.
Recalling that $c_i \neq 0$, $c_j \neq 0$ and $c_i \neq c_j$ shows that $\mathsf{s}(i) \neq \mathsf{t}(i)$, $\mathsf{s}(j) \neq \mathsf{t}(j)$ and $b_y \neq b_z$.
It follows that either $\mathsf{s}(i) = \mathsf{s}(j)$ and $(y,z) = (\mathsf{t}(i),\mathsf{t}(j))$ or $\mathsf{t}(i) = \mathsf{t}(j)$ and $(y,z) = (\mathsf{s}(i),\mathsf{s}(j))$.
In particular, repeating this argument for all pairs $(i,j)$ proves that either all indices $\mathsf{s}(i)$ coincide and each pair $(\mathsf{t}(i),\mathsf{t}(j))$ belongs to $E$, 
or all indices $\mathsf{t}(i)$ coincide and each pair $(\mathsf{s}(i),\mathsf{s}(j))$ belongs to $E$.
In both cases, $\mathcal{G}$ has a clique of size $n$.

\subsection{Constant number of numerals per tile (parameter $\mathbf{\zeta}$)}
\label{section:NP-hardness-two-numerals}

We reduce from the scheduling problem of coupled tasks~\cite{chen21scheduling} to \probMinLength{}.

\newcommand{\probCoupledTask}{\textsc{coupledTaskScheduling}}
\problembox{\probCoupledTask{}
	\\
	\textbf{Input:} $n$ tasks, each task $i$ consisting of two sub-tasks of length $a_i$ and $b_i$ separated by an exact delay of $\ell_i$ and a makespan $C_{\text{max}}$.
	\\
	\textbf{Task:} Decide whether there exists a schedule of the $n$ tasks on a single machine such that no tasks overlap and the makespan is at most $C_{\text{max}}$.
}

Given an instance to \probCoupledTask{}, we create an instance of \probMinLength{} 
by creating a tile $S_i = y^{a_i} \star^{\ell_i} y^{b_i}$ of length $a_i + b_i + \ell_i$ for each task $i$.
Then this instance has a solution for parameter $C_{\text{max}}$ if and only if the \probCoupledTask{} instance has a solution with makespan at most $C_{\text{max}}$.
\citeauthor{chen21scheduling} showed that the problem is NP-hard even if all $a_i$ and $b_i$ are equal to 1, but the $\ell_i$'s are arbitrary (non-constant).
Therefore, we obtain our claim for \probMinLength{}.

\begin{theorem}\label{thmNP-hard-two-numerals}
	\probMinLength{} is NP-hard even if $\zeta = 2$, i.e., 
	the $y$-th tile is of the form $y \star^{\ell_y} y$ for some $\ell_y \in [1..\ell-2]$ and every $y \in [1..n]$. 
\end{theorem}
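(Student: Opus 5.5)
We reduce from \probCoupledTask{} restricted to unit sub-tasks, $a_i = b_i = 1$ for all $i$, with arbitrary exact delays $\ell_i$. \citet{chen21scheduling} showed this special case to be NP-hard. Given such an instance with $n$ tasks and makespan bound $C_{\max}$, we build the \probMinLength{} instance whose $y$-th tile is $S_y = y\star^{\ell_y} y$, and we set $\rho := C_{\max}$. The construction clearly takes polynomial time. Each tile has exactly two numerals, so $\zeta = 2$, and the tiles are over pairwise distinct alphabets as the problem requires. We may assume $\ell_y \geqslant 1$: a task with zero delay is simply a block of two consecutive unit slots, and we can either allow $\ell_y = 0$ in the tile form or note that the hardness construction of \citet{chen21scheduling} uses positive delays. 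I would check the latter in the cited source.

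Next we show that the two instances are equivalent, by an explicit bijection between schedules and placements. A time slot $t \in [1..C_{\max}]$ of the machine corresponds to position $t$ of the placement string $S[1..\rho]$. Task $y$ starting its first sub-task at time $t$ and its second at time $t+\ell_y+1$ corresponds to $S_y$ matching $S$ at offset $t-1$. Under this correspondence, its two numerals land on positions $t$ and $t+\ell_y+1$.

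For the forward direction, take a feasible schedule of makespan at most $C_{\max}$. Write $y$ at the two occupied slots of each task $y$ and $\star$ everywhere else. Since no two tasks overlap, no position receives two different numerals, so every $S_y$ matches $S$ and $|S| \leqslant C_{\max}$. If $|S| < C_{\max}$, we pad $S$ with $\star$s on the right; if needed, we also trim leading and trailing $\star$s, which only shortens $S$. For the backward direction, take a placement $S$ of length at most $\rho$ and let task $y$ start at the match position of $S_y$. Each position of $S$ holds a single symbol, so the numeral positions of distinct tiles are disjoint. Hence no two unit sub-tasks share a time slot, and all sub-tasks end by time $|S| \leqslant C_{\max}$.

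The only delicate point is the boundary convention of the match definition as written in the preliminaries, which uses the strict bound $x+|S|<|P|$ and 0- versus 1-indexing. I would fix the convention that a tile of length $L$ at offset $x$ occupies positions $x+1..x+L$, so that a placement of length $\rho$ corresponds exactly to makespan $\rho$. With that fixed, the argument above goes through, and NP-hardness of \probMinLength{} with $\zeta=2$ follows from the NP-hardness of the unit coupled-task problem.
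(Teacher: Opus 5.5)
Your proposal is correct and uses the same reduction as the paper: unit-time coupled-task scheduling with exact delays is mapped to tiles $y\star^{\ell_y}y$ with $\rho = C_{\max}$, and hardness comes from the cited NP-hardness of the unit-time case. The only difference is that you spell out the slot-to-position correspondence in both directions and flag the $\ell_y \geqslant 1$ and indexing conventions, all of which the paper asserts without discussion.
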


As a side note, the case of a single tile type $y \mapsto y^a \star^{\ell} y^b$ for constant $a,b,\ell$ is polynomial-time solvable~\cite[Theorem~8]{chen21scheduling}.
However, the problem remains open for the following cases:
\begin{itemize}
	\item two tile types of the form $y \mapsto y \star^{\ell_1} y$ and $y \mapsto y \star^{\ell_2} y$ for two arbitrary $\ell_1,\ell_2$, cf.~\cite{bekesi22first}
	\item one tile type of the form $y \mapsto y^a \star^\ell y^b$ for $a,b,\ell \in \omega(1)$.
\end{itemize}

For \probMaxShift{}, we can follow the original NP-hardness proof of \citet{even77remarks} (see~\cite{bannai24npcompleteness} for a description) from 3-coloring, 
where a reduction instance from a graph $(V,E)$ with $|V| = n$ represents a vertex by a tile that has as many numerals as its degree.
By \citet{garey76some}, 3-coloring is NP-hard even if each vertex has degree at most 4, and thus we obtain NP-hardness even if the tile types have at most 4 numerals.

\begin{theorem}\label{thmNP-hard-three-numerals}
	\probMaxShift{} is NP-hard even if $\zeta = 4$.
\end{theorem}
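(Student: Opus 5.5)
The plan is to reduce from \textsc{3-Coloring} on graphs of maximum degree at most $4$, which is NP-hard by \citet{garey76some}. We follow the reduction of \citet{even77remarks} as described in~\cite{bannai24npcompleteness}, and check that it produces tiles with at most $4$ numerals. Fix a graph $\mathcal{G}=(V,E)$ with $V=[1..n]$ and maximum degree at most $4$, and enumerate its edges as $e_1,\ldots,e_{|E|}$. For each vertex $v$, build the row (string) $S_v\in\{\star,v\}^{\ell}$ with $\ell = 3|E|$ (or a similar multiple of $|E|$). $S_v$ has a numeral at position $3k$ exactly when $v$ is an endpoint of $e_k$, and wildcards elsewhere. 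We then set $\rho=2$. Each row therefore has exactly $\deg(v)\leqslant 4$ numerals, giving $\zeta\leqslant 4$, and the construction is clearly polynomial.

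\emph{Soundness (coloring $\Rightarrow$ placement).} Given a proper coloring $c\colon V\to\{0,1,2\}$, we shift row $v$ by $s(v)=c(v)\leqslant\rho$. Any two rows have numerals in a common column only through shared edges. Fix $e_k=\{u,w\}$. The numeral of $u$ for $e_k$ lands at $3k+c(u)$ and that of $w$ at $3k+c(w)$; these differ because $c(u)\neq c(w)$. Numerals belonging to different edges $e_k\neq e_{k'}$ land in the disjoint windows $[3k..3k+2]$ and $[3k'..3k'+2]$, so they never collide. Hence every column receives at most one numeral, and $P$ is well defined with length $\ell+\rho$.

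\emph{Completeness (placement $\Rightarrow$ coloring).} Given shifts $s\colon V\to[0..2]$ with no collisions, define $c(v)=s(v)$. For an edge $e_k=\{u,w\}$, both rows have a numeral at column $3k$. If $s(u)=s(w)$, those numerals would coincide in $P$, which is a contradiction. So $c$ is a proper $3$-coloring.

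The only delicate point is the spacing. The gap between the numeral positions of consecutive edges must exceed $\rho$, so that shifts of up to $2$ cannot make numerals of different edges collide. Using stride $3$ with windows of width $3$ is exactly enough, and I would double-check this boundary case; choosing stride $4$ removes any doubt at no cost. Given that, the statement follows, since the tiles have at most $\max_v\deg(v)\leqslant 4$ numerals.
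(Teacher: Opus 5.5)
Your proposal is correct and follows the paper's route: the paper also reduces from 3-coloring restricted to graphs of maximum degree $4$ (Garey--Johnson--Stockmeyer) via the Even--Lichtenstein--Shiloach construction, where each vertex row carries exactly $\deg(v)$ numerals. You additionally write out the stride-$3$ matrix with $\rho=2$ and check both directions, which the paper leaves implicit; no hedging is needed there, since the windows $\{3k,3k+1,3k+2\}$ are already pairwise disjoint.
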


\begin{figure}[t]
	\centering
	\includegraphics[page=1]{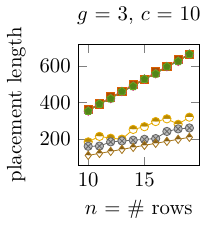}
	\includegraphics[page=1]{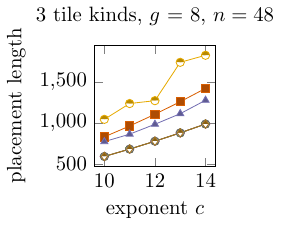}
	\includegraphics[page=2]{./plot/sameones.pdf}
	\caption{Placement lengths for different heuristics studied in \cref{section:experiments}.
		Left: Input of the form $X,Y,X,Y,\ldots$ with two tile types.
		Right: Input of the form $X,Y,Z,X,Y,Z,\ldots$ with three tile types, and additional helper tiles to fill all gaps in the shortest solution.
		All tiles $X$, $Y$ and $Z$ have the same number of ones.
	}
	\label{fig:sameones}
\end{figure}

\section{Experiments}
\label{section:experiments}

Finally, we present some experimental results concerning the output placement length of different \probMinLength heuristics, compared to the exact DP algorithm as well. 

The point of freedom in Ziegler's algorithm lies in the order in which tiles are placed.
While Ziegler suggests sorting tiles by decreasing frequencies of numerals,
we here experiment with different sorting strategies to see their effect on the final placement length.
We tested the following sorting strategies:
\texttt{none}: no sorting (the classic leftmost-fit greedy strategy),
	  \texttt{incFreq}: sort by increasing number of numerals,
	  \texttt{decFreq}: sort by decreasing number of numerals,
	  \texttt{incDens}: sort by increasing density, given by $\frac{\text{number of numerals}}{\text{length}}$, and
	  \texttt{decDens}: sort by decreasing density.
Additionally, we implemented a strategy, named \texttt{random}, that calls Ziegler's algorithm 10 times with random shufflings of the input tiles and returns the shortest result.

Unfortunately, except \texttt{random}, all strategies performed badly on the following instance,
using only two tile types $X$ and $Y$.
In detail, we select two integer parameters $c$ and $g$, 
and define the tile types $X : x \mapsto (x\star^g)^c x$ and $Y: y \mapsto (y\star^{g-1})^c y \star^c y$.
Both tile types have the same length $c(g+1) + 1$ and contain exactly $c+1$ numerals.
We then give $n$ tiles in the order $X,Y,X,Y,\ldots$ as input to our heuristics.
By construction, all defined sorting strategies behave the same since the order based on a stable sorting strategy is unchanged.
We observe the expected result in the left plot of \cref{fig:sameones}.
Only \texttt{random} is able to find a better placement.
To explain this phenomenon, we note that this chosen instance is favorable to random shuffling since the optimal solution would pick 
an order that groups $X$'s and $Y$'s individually.

In a second example, we reveal a bad behavior of \texttt{random}.
The idea is to introduce more freedom in the tile selection by adding a third tile type $Z$ defined as $Z = z \star^{c(g-2)} z (\star^{2g})^c z$.
This tile type has the same number of numerals and the same length as $X$ and $Y$.
If we additionally define filler tiles such that the minimal solution has no holes, we obtain the results shown in the right plot of \cref{fig:sameones}.
There, \texttt{random} performs increasingly worse than all other strategies for increasing $c$.

\clearpage

\bibliographystyle{plainnat}

\clearpage
\appendix

\section{Problems' Variants and Relationships}
\label{section:variants-relationships}

In this section, we analyze the possible variants of \probMaxShift, and their relationships and differences. In the setting of \probMaxShift{}, 
minimizing the length of the placement and minimizing the maximum size of the shift are equivalent: a solution has the maximum shift $\rho$ if and only if the corresponding placement has length $\ell + \rho$. First, we analyze the relationship between \probMaxShift and \probMaxShiftString (defined in~\ref{section:preliminaries}) :
\begin{remark}[\probMaxShift = \probMaxShiftString]
We note that problem \probMaxShift{} is equivalent to \probMaxShiftString{}.
Indeed, if $M$ is the input of \probMaxShift, we can consider $S_i$ as the $i$-th row of $M$.
Then, by definition of match, any placement $C$ given by \probMaxShift is such that each $S_i$ has a match in $C$, when seen as a partial string over the alphabet $\{\star\} \cup [1..n]$.
Furthermore, the amount $s(i)+1$ by which each row is shifted plus one is equal to the position at which it matches in the final placement.
Therefore, having maximum shift $\rho$ corresponds to the condition $\max_i s(i)+1 \leqslant \rho+1$.
Vice versa, any instance of \probMaxShiftString can be transformed into an instance of \probMaxShift by creating the corresponding binary matrix, with the same relationship between solutions. 
\end{remark}

\begin{example}
\label{example:smc-string}
Consider the instance $M$ of \probMaxShift given in Example~\ref{example:smc}.
The corresponding instance of \probMaxShiftString is given by the input strings $S_1 = \star{}1\star{}\star{}1\star{},$ $S_2 = \star{}2\star{}2\star{}\star{}$, $S_3 = \star{}3\star{}\star{}\star{}3$.
Indeed, the placement string $C=\star{}132123\star{}$ is also a solution of \probMaxShiftString{}, as we can easily see that $S_1$ has a match in $C$ at position 1, $S_2$ has a match in $C$ at position 3, and $S_3$ has a match in $C$ at position 2.
This corresponds to the shift function $s(1) = 0, s(2) = 2, s(3) = 1$, which provides a solution for \probMaxShift{}.
The matches can be represented as follows, with $C$ at the bottom:

$$\begin{matrix}
        \star{} & 1 & \star{} & \star{} & 1 & \star{}\\
         & & \star{} &2 & \star{} & 2 & \star{} & \star{}\\
         & \star{} &3 & \star{} & \star{} & \star{} & 3 \\
         \hline 
       \star{} &1 &3 &2 &1 &2 &3  &\star{} 
\end{matrix}$$
\end{example}

Looking at Examples~\ref{example:smc} and~\ref{example:smc-string}, a natural variation of \probMaxShift{} emerges: notice how the resulting placement $C$ has leading and trailing $\bsq{\star}$s, which could have been trimmed to obtain a shorter placement.
This leads to our variation,  \probMinLength{}, where we instead want to minimize the length of the placement $C$ \emph{by trimming the wildcard borders}. To this end, we require the input to be formed by trimmed binary strings (i.e. starting and ending with non-wildcard elements); formally defining \probMinLength as stated in the introduction.

\begin{remark}
    Requiring all input strings $S_i$ to be trimmed is equivalent to requiring the output minimum-length string $S$ to be trimmed. In other words, our current formulation of \probMinLength is equivalent to the following one:
    
\problembox{\textbf{Input:}
A set of $n$ binary strings $S_1, \ldots, S_n$ such that $S_i \in\{\star,i\}^{\ell}$ for some $\ell$ (i.e., all have the same length $\ell$ but distinct binary alphabets), and an integer $\rho \in [0..\ell n]$.
\\
\textbf{Task:}
Decide whether there exists a string $S'[1..\rho'] \in (\{\star\} \cup [1..n])^{\rho'}$ such that each $S_i$ has a match at some position of $S'$, and such that when trimming $S'$ we obtain $S$ of length $\rho$.
}

\end{remark}

This problem, which is the main focus of the paper, is quite different from the original \probMaxShift problem. Indeed, given the equivalence between minimizing the length and minimizing the maximum shift in the previous setting, one may think that solving \probMinLength on the trimmed input of \probMaxShift would be equivalent to trimming the output of \probMaxShift. This is in fact not the case:

\begin{remark}[$\probMinLength \neq \probMaxShift = \probMaxShiftString$]
Given a set of $n$ binary strings $S_1, \ldots, S_n$ such that $S_i \in\{\star,i\}^{\ell}$ for some $\ell$, solving \probMinLength on the trimmed versions $\hat S_1, \ldots, \hat S_n$ of $S_1, \ldots, S_n$ is different from trimming the solution of \probMaxShift (=\probMaxShiftString).

For instance, consider $S_1 = 11\star^h1$, and $S_2 = 2\star\star 2 \star^{h-1}$ with $h\geqslant 4$, both of length $\ell = h+3$. It is immediate to see that there is a solution to \probMaxShiftString for $\rho=1$, by shifting $S_1$ by one position to the right, yielding $P=2112\star^h1$ of length $\ell +1$. This solution is not trimmable, as it starts and ends with numerals. Let us now consider problem \probMinLength on the trimmed strings $\hat S_1 = S_1, \hat S_2 = 2\star \star 2$. In this case, the minimum-length solution is different: it has length $\ell$, and it is obtained by fitting $\hat{S_2}$ inside $\hat{S_1}$, for instance as $S= 112\star\star 2 \star^{h-4}1$. Note how in this solution the maximum shift is 2 instead of 1. 

\end{remark}

The previous remark showed something even stronger: in this new setting, it is no longer the case that minimizing the length of the placement and minimizing the maximum length of a shift are equivalent. 
This means that the following problem is distinct from \probMinLength{} as well:

\problembox{\probMaxShiftTiles{}
	\\
	\textbf{Input:}
A set of $n$ binary strings $S_1, \ldots, S_n$ such that $S_i \in i\{\star,i\}^{\ell_i} i$ for some $\ell_i$ (i.e., possibly different lengths and all over distinct binary alphabets),
and an integer $\rho \in [0..\ell n]$.
\\
\textbf{Task:}
Decide whether there exists a string $P$ with $P[i] \in \{\star\} \cup [1..n]$ for every $i \in [1..|P|]$, such that each $S_i$ has a match at some position $p_i \leqslant k$ of $P$. 
}

\begin{remark}
    \probMinLength is different from \probMaxShiftTiles. 
\end{remark}

Overall, we have four variant parameters, two concerning the input type (tiles or equal-length strings) and two concerning the output  objective function (minimizing the maximum shift or the placement length). The possible combinations boil down to three problems, summarized in the following table:
\begin{table}[h]
\begin{tabular}{ c | c | c | c }
\multicolumn{1}{c}{} & & \multicolumn{2}{c}{Input}\\
\cline{3-4}
\multicolumn{1}{c}{} & & Same-length Strings & Tiles \\ \hline
 
\multirow{2}{*}{Goal} & Minimizing the Max Shift & Problem \probMaxShift &  Problem \probMaxShiftTiles \\  
\cline{2-4}
 & Minimizing the Final Length & Problem \probMaxShift & Problem \probMinLength   
\end{tabular}
\end{table}

Reformulating \probMinLength{} as an optimization problem to find $\rho = \min_{s} \max_{i \in [1..n]} (s(i) + |S_i|)$, 
we observe that all problem variants are equivalent when all tiles have the same length.

In this work, we chose to focus on one of the two unexplored variants, namely \probMinLength. An in-depth study on the complexity of the last variant, \probMaxShiftTiles, is still open.

\section{Improved Algorithm for One Distinct Tile for Shorter Lengths}
The problem can also be solved in $O(\log n\cdot 16^\ell)$ time.
    Let $D[e][L][R]$ be the minimum length of a placement starting with $L$ and ending with $R$ that contains the tile $2^e$ times ($L$ and $R$ are bitmasks of length $\ell$, the minimum length includes $L$ but excludes $R$).

    Clearly, $D[e][L][R] = \min_{L',R'}\{D[e - 1][L][L'] + s_{L',R'} + D[e - 1][R'][R]\}$, where $s_{L',R'}$ is minimal s.t.\ $L'$ is disjoint from $s_{L',R'} + R'$. ($s_{L',R'}$ can be precomputed.)
    Compute $D[e][L][R]$ for all $e\leqslant \log_2{n}$ (this takes $O(\log n\cdot16^n)$ time).

    $D$ now gives us the answer if $n$ is a power of $2$. Let's generalize our solution to arbitrary $n$:
    We want to place $n$ copies of the tile, and the last $\ell$ positions of our placement should be $R$.
    Let $e'$ be maximal s.t.\ $2^{e'} \leqslant n$.
    Our answer is then \[
    \min_{L',R'}\{(\text{min.\ length of a pattern ending in $L'$ with $n-2^{e'}$ tiles}) + s_{L',R'} + D[e'][R'][R]\}
    \]
    In words, we compute the answer for $n$ tiles and ``the end'' $R$ from the minimum widths of a pattern for $n-2^{e'}$ tiles ending with all $2^\ell$ possible ends $L'$.
    
    We have $\log_2n$ recursion levels, each of which needs to return $2^\ell$ values, which in turn can be computed in $O((2^\ell)^2) = O(4^\ell)$ time from the results of the next lower recursion level, resulting in a time complexity of $O(\log n\cdot 8^\ell)$

\section{Missing Figures and Examples}

\begin{table*}[h]
	\centering
	\caption{Example of input set for the lower bound of \cref{section:approx-lowerbound}, for $\delta=4$. The placement produced by the leftmost-first greedy strategy would be $1\star^3 1\star^3 1\star^2 21\star 2\star 12 \star^2 2 3 \star 2 \star 32 \star^2 3\star^2 43 \star 4 \star 34 \star^2 45 \star 4 \star 54 \star^2 5 \star^2 65 \star 6 \star 5 6 \star^2 67 \star 6 \star 76 \star^2 7 \star^3 7 \star^3 7$ of length $80$, while an optimal solution is given by $(1357)^5 (246)^5$ of length $35$ having no holes (i.e., $\bsq{\star}$s). 
        }
		\label{table:example-lower-bound}
			\setlength{\tabcolsep}{1pt}
		\begin{tabular}{p{3em}*{22}{p{1.5em}}}
 $i$ & 1 & 2 & 3 & 4 & 5 & 6 & 7 & 8 & 9 & 10 & 11 & 12 & 13 &14 &15 &16 &17 
	\\
 $Y_1 = $  &
	$1$ & $\star{}$ & $\star{}$ & $\star{}$ & $1$ & $\star{}$ & $\star{}$ & $\star{}$ & $1$ & $\star{}$ & $\star{}$ & $\star{}$ & $1$ & $\star{}$ & $\star{}$ & $\star{}$ & $1$ 
	\\
	$X_1 = $ &
	$2$ & $\star{}$ & $\star{}$ & $2$ & $\star{}$ & $\star{}$ & $2$ & $\star{}$ & $\star{}$ & $2$ & $\star{}$ & $\star{}$ & $2$  & $\star{}$ & $\star{}$ & $2$  
     \\
     $Y_2 = $  &
	$3$ & $\star{}$ & $\star{}$ & $\star{}$ & $3$ & $\star{}$ & $\star{}$ & $\star{}$ & $3$ & $\star{}$ & $\star{}$ & $\star{}$ & $3$ & $\star{}$ & $\star{}$ & $\star{}$ & $3$ 
	\\
	$X_2 = $ &
	$4$ & $\star{}$ & $\star{}$ & $4$ & $\star{}$ & $\star{}$ & $4$ & $\star{}$ & $\star{}$ & $4$ & $\star{}$ & $\star{}$ & $4$  & $\star{}$ & $\star{}$ & $4$  
     \\
     $Y_3 = $  &
	$5$ & $\star{}$ & $\star{}$ & $\star{}$ & $5$ & $\star{}$ & $\star{}$ & $\star{}$ & $5$ & $\star{}$ & $\star{}$ & $\star{}$ & $5$ & $\star{}$ & $\star{}$ & $\star{}$ & $5$ 
	\\
	$X_3 = $ &
	$6$ & $\star{}$ & $\star{}$ & $6$ & $\star{}$ & $\star{}$ & $6$ & $\star{}$ & $\star{}$ & $6$ & $\star{}$ & $\star{}$ & $6$  & $\star{}$ & $\star{}$ & $6$  
	\\
     $Y_4 = $  &
	$7$ & $\star{}$ & $\star{}$ & $\star{}$ & $7$ & $\star{}$ & $\star{}$ & $\star{}$ & $7$ & $\star{}$ & $\star{}$ & $\star{}$ & $7$ & $\star{}$ & $\star{}$ & $\star{}$ & $7$ 
	\\
		\end{tabular}
	\end{table*}

\begin{algorithm}
\caption{DP-algorithm for \probMinLength{} with one distinct tile.}
\label{algminLengthDP}
\begin{algorithmic}[1]
\State \textbf{Function} minLengthDP($n$, $P$)
\State \textbf{Input:} $n$ tiles $P$ of length at most $\ell$
\State initialize $X[0][0] \gets 0$
\For{$i$ from $1$ to $n$}
	\For{each $Y$ with $X[i-1][Y]$ defined}
		\For{each distinct $\ell$-length suffix $Z$ among the strings in $Y \bowtie P$}
			\State Let $d$ be the number of new positions added by merging $Y$ and $P$
			\If{$X[i][Z]$ is undefined or $X[i-1][Y] + d < X[i][Z]$}
				\State $X[i][Z] \gets X[i-1][Y] + d$
			\EndIf
		\EndFor
	\EndFor
\EndFor
\State \Return the minimum defined value in $X[n][\cdot]$
\end{algorithmic}
\end{algorithm}

\clearpage
\section{Inapproximability}

We start with a general inapproximability result for both \probMinLength{} and \probMaxShift{}, and subsequently show a stronger result for \probMaxShift{}.

\begin{theorem}
	A $(2-\epsilon)$-approximation of \probMinLength{} and \probMaxShift{} is NP-hard for a constant $\epsilon \in (0,2]$.
\end{theorem}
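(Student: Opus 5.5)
We prove the theorem with a gap-introducing reduction: a known NP-hard problem maps to instances whose optimum is at most some value $L$ in the yes-case and at least $2L$ in the no-case. Then any $(2-\epsilon)$-approximation would separate the two cases and decide the source problem in polynomial time.

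For \probMinLength{}, the natural source is the hardness for $\kappa=1$ (\cref{thmNP-hardOneTile}) or for $\zeta=2$ (\cref{thmNP-hard-two-numerals}), amplified by a ``gadget'' tile. The plan is as follows. Take an instance from the hardness proof whose yes-instances admit a hole-free placement of length exactly $m$, where $m$ is the total number of numerals (as in the reduction of \citet{bannai24npcompleteness}). In a no-instance, every placement has length at least $m+1$. Now amplify this additive gap of one to a multiplicative gap of about two. Add one \emph{frame} tile $F$ of length $m$ whose numerals occupy exactly the complement of a fixed pattern, so that the original tiles can only interleave into $F$ if they fill its $\star$-positions perfectly. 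In the yes-case all tiles fit inside $F$, giving optimal length $m$. In the no-case, some original tile cannot be placed inside the span of $F$. Choosing the tiles to be long (length close to $m$, e.g.\ by padding each tile with a long run of numerals at one end after scaling) forces that tile to stick out by nearly its full length, so the optimum is about $2m$. A $(2-\epsilon)$-approximation therefore distinguishes $m$ from $(2-o(1))m$ once $m$ is large compared with $1/\epsilon$.

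For \probMaxShift{}, we use the classical reduction of \citet{even77remarks} from 3-coloring. There, a yes-instance has maximum shift $\rho\le 2$, while a no-instance forces shift at least $3$. This yields inapproximability below $3/2$ directly. To get a factor of $2$, we modify the gadget so that the colors correspond to shifts $\{0,1\}$ (2-colorability is easy, so instead we encode a hard problem where a valid shift is in $[0..k]$ versus shift at least $2k$). Concretely, we scale the construction: replace each column by a block of $k$ columns, so that shifts are only meaningful in multiples of $k$. Thus the gap between ``fits with shift $\le k$'' and ``needs shift $\ge 2k$'' comes from a hard 2-choice problem, for which we would use the $\kappa=1$ clique construction restricted to two shift values, or \textsc{Not-All-Equal SAT}.

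The main obstacle is the no-case lower bound: we must show that when no perfect interleaving exists, \emph{every} placement, not just natural ones, pays roughly a factor of two. I would first attempt the scaling trick, blowing each position into a block of $k$ so that misaligned shifts collide, and verify that partial overlaps of the long numeral runs are impossible except at block-aligned offsets.
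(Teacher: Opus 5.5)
\textbf{There is a genuine gap in both halves.} Your overall strategy (a gap-introducing reduction with a frame tile that must host the original tiles) matches the paper's. However, the step that turns ``does not fit'' into ``costs a factor of two'' fails as you describe it for \probMinLength{}.

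\textbf{The \probMinLength{} amplification.} Padding each tile with a long run of \emph{numerals} makes every placement pay for all of those numerals. With $n$ tiles each carrying a run of length close to $m$, the frame (and hence the yes-case optimum) must already have length about $nm$. One tile sticking out adds only about $m$, so the ratio your argument certifies is roughly $1+1/n$, not $2$. Relatedly, a frame of length $m$ that itself contains numerals cannot also host the $m$ numerals of a hole-free yes-placement. Nothing in your design forces a non-fitting tile to stick out ``by nearly its full length'' either: at a window bordered by single numerals, a tile can overhang by a few positions.

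The missing idea is to make the long part of each tile a run of \emph{wildcards}, and to put one long solid block of numerals into the frame. The paper starts from any instance (tiles $T_i$, threshold $\rho$); no hole-freeness is needed. It adds $G = 1\star^\rho 1^\delta\star^\rho 1$ and replaces each $T_i$ by two copies of $T_i$ separated by wildcards. The copies must start exactly $\rho+\delta$ apart, which is the offset between the two windows of $G$. The construction then works as follows.
\begin{itemize}
\item All wildcard gaps lie over the same block $1^\delta$, so a yes-instance gives length exactly $|G| = 2\rho+\delta+2$.
\item Since $\delta\geqslant\ell$, no copy of $T_i$ can overlap $1^\delta$ at all: its end symbols are numerals and it is too short to span the block.
\item Hence each duplicated tile either has its two copies at the same offset inside the two windows, or protrudes beyond $G$ by at least $\delta$.
\item If the first alternative held for every tile, the original instance would fit in length $\rho$. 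So in a no-instance some tile protrudes, and the length is at least $2\rho+2\delta+2$.
\end{itemize}
Taking $\delta$ large compared with $\rho/\epsilon$ gives the factor $2-\epsilon$.

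\textbf{The \probMaxShift{} part.} Scaling the 3-coloring reduction of Even et al.\ by blocks of $k$ columns only reproduces the ratio $3/2$ (shift $\leqslant 2k$ versus $\geqslant 3k$). Your factor-$2$ variant based on a ``hard 2-choice problem'' cannot exist unless P$=$NP. Collisions in \probMaxShift{} always involve exactly two rows. If every row has only two admissible shifts, each forbidden pair of choices $(s_i,s_j)=(a,b)$ is a 2-clause, so feasibility is a 2-SAT instance, solvable in polynomial time. In particular, \textsc{Not-All-Equal SAT} cannot be encoded this way. Restricting the clique construction for $\kappa=1$ to two shift values also discards the large shift range it needs to encode the choice of vertices. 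The paper avoids all of this by reusing the duplication-plus-frame amplification above for \probMaxShift{}, with rows padded to a common length. There, a row that does not fit into the windows of $G$ forces a shift difference of order $\delta$, against order $\rho$ in the yes-case.
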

\begin{proof}
	Given a multi-set of tiles $T_1,\ldots,T_n$ and an integer $\rho$ as an instance of \probMinLength{},
	we produce a new set of tiles $T'_i$ with $T'_i = T_i \star^\delta T_i$ for some $\delta \geqslant \ell$ to be defined later.
	We additionally create a tile $G = 1 \star^\rho 1^\delta \star^\rho 1$.
	Our new instance of \probMinLength{} is the tiles $T'_1,\ldots,T'_n$ and $G$ with the decision problem to fit them in a placement of length $2k + \delta + 2$.
	In case that the original instance is a YES instance, we can place all tiles $T'_i$ inside the placement initially filled with $G$.
	Otherwise, we need to shift $G$ or one of the tiles $T'_i$ outside the placement, which costs us at least $\delta$ additional positions.
Since $2(\ell+\delta)/ (2\ell + \delta ) \rightarrow_{\delta \rightarrow \infty} 2$, 
by scaling $\delta$ we can make the ratio arbitrary close to 2.
This means that a $(2-\epsilon)$ approximation is hard for any $\epsilon \in (0,2]$.

The same proof works for an instance of \probMaxShift{}, where $\rho$ then is the maximum required shift.
\end{proof}

In what follows, we show a stronger result for \probMaxShift{} by reducing from the chromatic number problem.
The \emph{chromatic number} of a graph $(V,E)$ is the least amount of colors needed to color the nodes of $V$ such that 
each node has a color different to all of its neighbors connected by $E$.
In what follows, we reduce hardness results for the approximation of the chromatic number to show hardness for the approximation of our problems.

\citet{lund94hardness} showed that approximating the chromatic number of a graph with $n$ vertices is NP-hard for a ratio $r = n^{\epsilon}$ for a constant $\epsilon > 0$
We can build on this result and show that \probMaxShift{} is also not $r$-approximable by reducing from $\rho$-coloring.
We extend the 3-coloring approach to a min-coloring approach by using $n$ colors as the number of columns assigned to an edge. 
Computing the minimum shift means that we compute the minimum amount of colors needed.
If there exists an approximation algorithm for \probMaxShift{}, it would directly give an approximation algorithm for the chromatic number as the objective functions have the same output, 
i.e, the chromatic number is the maximum needed shift, an L-reduction with constant one.
We conclude with the following theorem.

\begin{theorem}
	\probMaxShift{} cannot be approximated within a ratio of $r = n^{\epsilon}$ for a constant $\epsilon > 0$ unless P=NP.
\end{theorem}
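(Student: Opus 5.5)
The plan is a gap-preserving reduction from the chromatic number problem. I would use the hardness of \citet{lund94hardness}: for some constant $\epsilon>0$, it is NP-hard to approximate the chromatic number of an $N$-vertex graph within a factor of $N^{\epsilon}$. I would build an instance of \probMaxShift{} whose optimal maximum shift is an affine function of the chromatic number $\chi(\mathcal G)$, with additive offset $0$ or $-1$. Any $r$-approximation for \probMaxShift{} then yields an $O(r)$-approximation for $\chi$, and the bound transfers, possibly after adjusting $\epsilon$ because the matrix has $n=\mathrm{poly}(N)$ rows.

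The construction generalises the 3-colouring reduction of \citet{even77remarks}, as described in~\cite{bannai24npcompleteness}. Given $\mathcal G=(V,E)$ with $V=[1..N]$ and edges $e_1,\ldots,e_{|E|}$, I would create one row per vertex. The columns are split into $|E|$ blocks, one per edge, each of width $w \geqslant N$ (so $\geqslant N$ colours fit). Row $u$ gets a numeral at the first column of block $e_j$ whenever $u\in e_j$. A shift $s(u)\in[0..k-1]$ is read as colour $s(u)$.

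\textbf{Soundness (colouring $\Rightarrow$ shifts).} If both endpoints of $e_j$ receive the same shift, their numerals collide in block $j$. If the shifts differ, they land on distinct columns. Since all other numerals lie in other blocks at distance $\geqslant w > k$, a proper $k$-colouring gives a valid placement with maximum shift $k-1$.

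\textbf{Completeness (shifts $\Rightarrow$ colouring).} A valid placement with maximum shift $\rho$ must assign different shifts to adjacent vertices, so it is a proper colouring with at most $\rho+1$ colours. Hence $\mathrm{OPT}_{\probMaxShift}=\chi(\mathcal G)-1$.

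An $r$-approximation would output $\rho\leqslant r(\chi-1)$, which gives a colouring with $\rho+1\leqslant r\chi$ colours. This is an $r$-approximation of $\chi$. The row count is $N$, so $r=n^{\epsilon}$ contradicts \citet{lund94hardness} unless P=NP.

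The main obstacle is the cross-block interference claimed in soundness. A shift of up to $N-1$ could push a numeral of block $j$ into block $j+1$. It could also make a vertex's numeral in one block collide with its own or another vertex's numeral in a neighbouring block. I would first try to rule this out simply by taking $w>2N$, so shifted numerals never leave their block. Then only the pairwise same-block collisions remain, and these are exactly the edge constraints.

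A second subtlety is that \probMaxShift{} allows $P[j]=0$ and requires rows with no numerals to be harmless. Isolated vertices can be dropped beforehand without changing hardness.

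Finally, I would check the $\chi-1$ versus $\chi$ offset. For $\chi\geqslant 2$ it loses at most a factor of $2$ in the ratio, which is absorbed by slightly decreasing $\epsilon$.
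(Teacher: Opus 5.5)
Your proposal is correct and follows essentially the paper's route. You generalize the 3-colouring reduction of \citet{even77remarks}, with one row per vertex and one numeral per incident edge, giving each edge its own block of at least $n$ columns. The optimal maximum shift then encodes the chromatic number, and the $n^{\epsilon}$ hardness of \citet{lund94hardness} transfers with the same $n$, since rows are vertices.

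Your write-up is more careful than the paper's sketch in two places. First, you address cross-block collisions explicitly. Note that $w \geqslant N$ already suffices: each numeral sits at the first column of its block and all shifts are at most $N-1$, so your extra margin $w>2N$ is not needed. Second, you track the offset $\mathrm{OPT}=\chi-1$ coming from shifts in $[0..\rho]$, whereas the paper simply identifies the chromatic number with the maximum shift.
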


\end{document}